\let\csname equation*\endcsname\relax
\let\csname endequation*\endcsname\relax
\newtheorem{theorem}{Theorem}[section]
\newtheorem{proposition}[theorem]{Proposition}
\theoremstyle{definition}
\newtheorem{remark}[theorem]{Remark}
\numberwithin{equation}{section}
\begin{document}

\title{\bf Topology trivialization transition in random non-gradient autonomous ODE's on a sphere}
\author{Y. V. Fyodorov }

\address{ King's College London, Department of Mathematics, London  WC2R 2LS, United Kingdom}
\begin{abstract}
We calculate the mean total number of equilibrium points in a system of $N$ random autonomous ODE's introduced by Cugliandolo et al. \cite{CKDP} to describe non-relaxational glassy dynamics on the high-dimensional sphere. In doing it we suggest a new approach which allows such a calculation to be done most straightforwardly, and is based on efficiently incorporating the Langrange multiplier into the Kac-Rice framework.
Analysing the asymptotic behaviour for large $N$ we confirm that the phenomenon of 'topology trivialization' revealed earlier for other systems holds also in the present framework with nonrelaxational dynamics. Namely, by increasing the variance of the random 'magnetic field' term in dynamical equations we find a 'phase transition' from the exponentially abundant number of equilibria down to just two equilibria. Classifying the equilibria in the nontrivial phase by stability remains an open problem.
\end{abstract}

\maketitle


\section{Introduction}
Time evolution of large complex systems is often described within the mathematical framework of coupled first-order autonomous nonlinear ordinary differential equations (ODEs)
\begin{equation}\label{1}
\frac{d {\bf x}}{dt} ={\bf F}({\bf x}), \quad {\bf x}\in \mathbb{R}_N
\end{equation}
The choice of the vector field ${\bf F}({\bf x})$ and hence detailed properties of the phase space trajectories strongly depend on specific applications and vary considerably from model to model. In reality, however, the detailed description of the vector fields is rarely available for large enough systems of considerable complexity in problems of practical interest.   At the same time it is natural to assume that in order to understand {\it generic} qualitative properties of the global dynamics of large systems of ODE's shared by many models of similar type it may be enough to retain only a few characteristic structural features of the vector field, treating the rest as random. In part such approach is methodologically inspired by undisputed success of the Random Matrix Theory (RMT) which manages to describe many properties of systems of very diverse nature, such as energy levels of heavy nuclei,  zeroes of the Riemann zeta-function and distances between tightly parked cars in a single conceptual framework \cite{RMT}. One also may note that large systems of random autonomous ODE's became popular in recent years in such fields as neuronal networks (see the classical paper \cite{SCS1988} and more recently \cite{WT,KS2015}), machine learning \cite{GF}, complex gene regulatory networks \cite{deJong}, populational dynamics of large ecosystems ( see e.g. \cite{TY2003,SVW05} for particular examples and  \cite{AlessinaRev} for a recent review of the long line of research
stemming from the classical paper \cite{May1972}), or random catalytic reaction networks \cite{SFM}.

Study of any dynamical system traditionally starts with the "local stability analysis" amounting to determining all possible points of equilibria and dynamical behaviour in the vicinity of those points. In practice this requires finding zeroes of the vector field  ${\bf F}({\bf x})$ and then classifying them by stability via eigenvalues of
the associated Jacobian matrix ${\partial F_i}/{\partial x_j}$, separately at each equilibrium. In general, the Jacobian matrices are asymmetric, unless the dynamics is of purely gradient descent type (also known as relaxational).
 Therefore generically equilbria are characterized by complex eigenvalues, with
locally stable equilibria (i.e. those attracting asymptotically nearby trajectories) being those characterized by eigenvalues with only negative real parts.
If however at least one of the eigenvalues has positive real part the equilibrium is locally unstable and the system will eventually go away from it along some directions in phase space. The detail of approaching equilibria or departing from it will certainly depend on the number of unstable directions and the size of imaginary parts of its eigenvalues. In fact, it is useful to have in mind that linearly unstable equilibria in systems with non-relaxational dynamics may give rise to limit cycles (stable or unstable) in nonlinear setting, and further to chaotic trajectories.

For  a dynamical system with many degrees of freedom performing the equilibria stability analysis for every and each equilbrium is a well-known formidable analytical and computational problem, and one may hope to get some insights into a generic behaviour by attempting to answer similar questions statistically for some classes of random models.
In such a framework it was recently discovered \cite{WT,FK2016} that the simplest nontrivial characteristic, the total number $\mathcal{N}_{tot}$ of all possible equilibria, may display as a function of some control parameters and the system size $N$ an abrupt transition from
a topologically trivial phase portrait with a single equilibrium into a topologically non-trivial regime characterised by an exponential in $N$ number of equilibria. Such phenomenon was then called a 'topology trivialization' \cite{FLD2013}, though the name  'topology detrivialization' can be considered as more appropriate.

In both cases the system of ODE's under consideration was of the form
\begin{equation}\label{2}
\dot{x}_i=-\mu x_i+f_i(x_1,\ldots, x_N)=0, \, i=1,\ldots, N.
\end{equation}
where the control parameter $\mu>0$ ensured the exponential relaxation towards zero equilibrium ${\bf x}=0$ in the absence of couplings $f_i({\bf x})$ between N degrees of freedom represented by variables $x_i$.  The coupling terms $f_i({\bf x})$ were however chosen in \cite{WT} and \cite{FK2016} in
somewhat different form dictated by particular context of the research in those papers. Namely, the paper \cite{WT} was motivated
by neuronal networks paradigm and hence the couplings were chosen in the form  $f_i({\bf x})=\sum_{j}J_{ij}S(x_j)$ where $S$ is an odd sigmoid function representing the synaptic nonlinearity and $J_{ij}$ are independent centred Gaussian variables representing the synaptic connectivity between neuron $i$ and $j$.  Although being Gaussian,  the corresponding vector field does not have a simple covariance structure as a function of ${\bf x}$ which makes a rigorous
mathematical analysis of the problem quite challenging. Nevertheless, a shrewd semi-heuristic analysis performed in \cite{WT} revealed the existence of critical coupling threshold  beyond which there is an exponential growth in the total number of equilibria.
In contrast, the work \cite{FK2016} was motivated by the context of famous "stability vs. diversity" debate due to Robert May \cite{May1972,AlessinaRev},
and hence used a freedom of choosing the nonlinear couplings $f({\bf x})$ as a random {\it homogeneous} Gaussian field which ensured
 mathematical tractability of the counting problem, and allowed to evaluate the expected value $\mathbb{E}\left\{\mathcal{N}_{tot}\right\}$ rigorously
 for any choice of parameters. The ensuing asymptotic analysis for $N\gg 1$ revealed then again the existence of an abrupt "topology detrivialization"
transition when decreasing the ratio of the relaxation rate $\mu$ to the variance of the Gaussian field beyond some critical threshold value $\mu_c$.
Interestingly, the rate of exponential growth of the total number of points close to the threshold turned out to be equal in both models,
 pointing towards certain universality of the observed transition.

 The problem of classifying the equilibria by stability in a general setting of \cite{FK2016} remains a major challlenge, apart from the special case of the {\it gradient descent} flow
 characterised by the existence of a potential function $V(\mathbf{x})$ such that $\mathbf{f}=-\nabla V$.  In that case the stable equilibria coincide with the stationary points of the Lyapunov function $L(\mathbf{x})=\mu{|{\bf x}|^2}/{2}+V({\bf x})$, and
 finding mean number of the minima (and indeed saddle points with any given number of negative eigenvalues of the Hessian) is possible, see \cite{FyoNad} for the original calculation and \cite{my2015} and references therein for a recent review. The mean number of stable equilibria for a general non-potential flow was very recently addressed in \cite{BFKunpub} using large deviation ideas.  It turned out that generically when  decreasing $\mu$ below the threshold value $\mu_c$ the system first transits over from the '{\it absolute stability'} regime with a single stable equilibrium for $\mu>\mu_c$ to '{\it absolute instability}' regime for $\mu_B<\mu<\mu_C$ where equilibria are exponentially abundant, but typically {\it all} of them are unstable. Finally, at even smaller relaxation rate $\mu<\mu_B$ stable equilibria become exponentially abundant, but their fraction to totality of all equilibria remains exponentially small.

 Despite apparent advantages ensuring analytical tractability, a certain drawback of the model choice in \cite{FK2016} is that
 to generate many samples of homogeneous (i.e. stationary and isotropic) random fields even in moderately high dimensions is known to be prohibitively expensive numerical procedure. This fact makes numerical investigation of the corresponding dynamical systems problematic, and hence the
 computational verification of the theoretical predictions hardly feasible. This motivated us to consider a model of systems of random ODE's whose behaviour shows the same rich phenomenology as those in \cite{WT,FK2016}, but whose vector fields can be relatively easily generated while retaining the property of the equilibria counting problem being analytically tractable.  We will see that the so-called "spherical model with non-relaxational dynamics" introduced in \cite{CKDP} (see also \cite{CriSomp} for earlier related studies) satisfies all these requirements, and is therefore a convenient framework for our goals.

 In this paper we are going to consider the following system of coupled ODE's:
 \begin{equation}\label{3}
 \dot{x}_k=-\lambda(t)x_k+h_{k}+f_k({\bf x}), \quad k=1,\ldots,N
 \end{equation}
where the (time dependent) Langrange multiplier $\lambda(t)$ is introduced to ensure that at any moment of time the $N-$component state vector ${\bf x}=(x_1,\ldots,x_N)^T$ satisfies the "spherical constraint"
 \begin{equation}\label{4}
 \sum_{k=1}^N x_k^2=N
 \end{equation}
 so that the dynamics is confined to the surface of $N-1$ dimensional sphere of the radius $R=\sqrt{N}$ centered at zero.
In particular, by differentiating the constraint (\ref{4}) and using (\ref{3}) one straightforwardly finds the Lagrange multiplier explicitly in the form
 \begin{equation}\label{5}
 \lambda(t)=\frac{1}{N}\sum_{k=1}^N x_k \left(h_k+f_k({\bf x})\right)
 \end{equation}
 If in (\ref{3}) one sets all the couplings $f_k({\bf x})$ to zero, the vector ${\bf h}=(h_1,\ldots,h_N)$ (following the terminology in the spin glass area we will frequently call ${\bf h}$ the 'magnetic field') drives the exponential relaxation of the system towards the global stable equilibrium at ${\bf x}=\sqrt{N} {\bf h}/|{\bf h}|$. From that point of view the role of the magnetic field ${\bf h}$ is analogous to the role of the parameter $\mu$ in the dynamics described by the system (\ref{2}).
 Note however that the topology of the sphere excludes the possibility of a single stable equilibrium for a vector field, and indeed
 total number of equilibria in the present model is two rather than one, with second totally unstable equilibrium at ${\bf x}=-\sqrt{N} {\bf h}/|{\bf h}|$.
 Our goal is to investigate how by introducing a strong enough random coupling fields $f_k({\bf x})$ this "trivial" topology of the phase portrait will be replaced with a very rich phase portrait with abundance of equilibria.

\section{Model definition and main results}

To specify the model further in a random setting, we have to make some statistical assumptions about components of the vector field $f_k({\bf x})$.
To that end following \cite{CKDP} we consider those components to be Gaussian mean zero random fields, with the covariance structure given by
\begin{equation}\label{6}
\mathbb{E}\left\{f_k({\bf x})f_p({\bf x'})\right\}=\delta_{kp}\Phi_1\left(\frac{\bf{x}\cdot{\bf x}'}{N}\right)+\frac{x_p  x_k'}{N}\Phi_2\left(\frac{\bf{x}\cdot {\bf x}'}{N}\right)
 \end{equation}
where here and henceforth we will use ${\bf a}\cdot {\bf b}$ to denote the inner product of vectors ${\bf a}$ and ${\bf b}$ (and further denote ${\bf x}^2={\bf x}\cdot{\bf x}$), and $\delta_{kp}$ stands for the Kronecker delta and we assume that $\Phi_1(u), \Phi_2(u)$ are some $N-$independent functions of the real variable $u$
satisfying
\begin{equation}\label{6a}
0 < \Phi_1\left(1\right)\le \Phi_1'\left(1\right) , \quad\quad -\Phi_1(1)\le \Phi_2(1)\le \Phi_1'(1)
\end{equation}
We further assume that the components $h_i, \, i=1,\ldots,N$ of the 'magnetic field' vector ${\bf h}$ are random real i.i.d. mean zero Gaussian variables:
 \begin{equation}\label{6b}
\mathbb{E} \left\{h_ih_j\right\}=\delta_{ij}\sigma^2, \quad \forall i,j
 \end{equation}

To motivate the choice (\ref{6})-(\ref{6a}) it is instructive to consider a particular representative example of fields of such type which can be constructed explicitly as follows. Define
\begin{equation}\label{7}
f_k({\bf x})=\sum_{j=1}^N J^{(1)}_{kj}x_j+\sum_{n,m=1}^N J^{(2)}_{knm}x_nx_m, \quad k=1,\ldots, N
 \end{equation}
where the coefficients $J^{(1)}_{kj}$ and $J^{(2)}_{knm}$ are further represented as
\begin{equation}\label{7a}
J^{(1)}_{kj}=V^{(1)}_{kj}+\alpha_1 V^{(1)}_{jk},  \quad J^{(2)}_{knm}=V^{(2)}_{knm}+\alpha_2 \left(V^{(2)}_{nkm}+V^{(2)}_{nmk}\right)
 \end{equation}
with real parameters $\alpha_1,\alpha_2$. Finally, we choose $N^2$ real variables $V^{(1)}_{kj}$ to be random mean zero i.i.d. Gaussians
 so that the corresponding covariance structure is given by:
\begin{equation}\label{7b}
\mathbb{E}\left\{ V^{(1)}_{kj} V^{(1)}_{pm}\right\}=\frac{J_1^2}{N} \delta_{kp}\delta_{jm}, \quad J_1>0
 \end{equation}
and similarly choose $N^3$ variables $V^{(2)}_{knm}$ to be mean zero real i.i.d. Gaussian variables independent of  $V^{(1)}_{kj}$
 satisfying
\begin{equation}\label{7c}
\mathbb{E} \left\{V^{(2)}_{knm} V^{(2)}_{pqr}\right\}=\frac{J_2^2}{N^2} \delta_{kp}\delta_{nq} \delta_{mr}, \quad J_2\ge 0
 \end{equation}

It is now straightforward to calculate the covariances of the fields from (\ref{7}) and to find it is given precisely by the form (\ref{6}) with
\begin{equation}\label{8}
\Phi_1\left(u\right)=(1+\alpha_1^2)J_1^2 u+(1+2\alpha_2^2)J_2^2 u^2, \quad
\Phi_2\left(u\right)=2\alpha_1J_1^2+2\alpha_2(2+\alpha_2)J_2^2 u
 \end{equation}
 Note that
 \begin{equation}\label{8a}
\Phi_1'\left(1\right)-\Phi_1\left(1\right)=(1+2\alpha_2^2)J_2^2\ge 0, \quad \Phi_1'\left(1\right)-\Phi_2\left(1\right)=(1-\alpha_1)^2J_1^2 +2(1-\alpha_2)^2J_2^2\ge 0
 \end{equation}
 as well as
\[
\Phi_1\left(1\right)+\Phi_2\left(1\right)=(1+\alpha_1)^2J_1^2+ (1+2\alpha_2)^2J_2^2\ge 0
\]
 so indeed (\ref{6a}) holds. This construction can be easily extended to include in (\ref{7}) polynomials terms of any higher order \cite{CKDP}, which motivates one to work with the general form of the covariance specified in (\ref{6}) via two functions $\Phi_1(u)$ and $\Phi_2(u)$
 satisfying (\ref{6a}).

\begin{remark} Suppose that the vector field with components $f_k({\bf x})$ describes a {\it gradient descent} dynamics, that is $f_k({\bf x})=\frac{\partial V({\bf x})}{\partial x_k}$.
Assume further that the "potential" function   $V({\bf x})$ is a gaussian isotropic random field with the covariance structure specified as
\begin{equation}\label{9}
\mathbb{E} \left\{V({\bf x}) V({\bf x}')\right\}=N\,F_V\left(\frac{\bf{x}\cdot{\bf x}'}{N}\right)
 \end{equation}
Then it is easy to check that the covariance structure of the  fields $f_k({\bf x})$ is exactly of the form (\ref{6}), with
functions $\Phi_1\left(u\right)$ and $\Phi_2\left(u\right)$ being related as
\begin{equation}\label{9a}
\Phi_1\left(u\right)=F'_V\left(u\right), \quad \Phi_2\left(u\right)=\Phi_1'\left(u\right)
 \end{equation}
 \end{remark}

 \vspace{0.5ex}

As follows from (\ref{8a}) for the particular choice of the field (\ref{7})-(\ref{7c}) the second relation in (\ref{9a}) can be satisfied only for $\alpha_1=\alpha_2=1$.
And indeed, one can check that such a special choice ensures that the field  is gradient, with the potential $V({\bf x})$ given by
\begin{equation}\label{9c}
V({\bf x})=\sum_{k,j=1}^N\quad  V^{(1)}_{kj}x_kx_j+\sum_{k,n,m=1}^N V^{(2)}_{knm}x_kx_nx_m,
 \end{equation}
so that $F_V\left(u\right)=J_1^2 u^2+J_2^2 u^3$.

In fact, the potential defined in (\ref{9c}) is a particular representative of the energy functionals associated with the so-called
spherical model of spin glasses whose dynamical and equilibrium static properties keep attracting over the last decades a lot of attention both in physical \cite{CS1992,CHS1993} and mathematical \cite{BDG1,Talagrand2003,CHHS} literature. It is evident that for this special case, after setting the magnetic field ${\bf h}$ to zero, the equilibria of the gradient dynamics on the sphere coincide with the stationary points of the potential $V({\bf x})$.
The associated system of ODE's (\ref{3}) describes   'relaxational' gradient descent towards the global minimum of the energy functional.  The problem of finding the mean number of stationary points with a given index for isotropic Gaussian potentials with covariance (\ref{9}) constrained to the sphere  was originally solved in the insightful papers \cite{Auf1,Auf2}, and later revisited for ${\bf h}\ne 0$ in \cite{FLD2013,my2015} from the point of view of concentrating on the 'topology trivialization' phenomenon in that framework.
 Those works demonstrated that calculating the mean number of stationary points in such a setting can be mapped onto a random matrix problem related to the standard GOE ensemble, and in this respect remains quite similar to the case of stationary fields in the Euclidean space, where such reduction was discovered originally, see \cite{Fyo04,BD07,FyoWi07,FyoNad}. Note also a recent work \cite{CheSchw} giving a unified treatment of both cases which went beyond certain restrictions in the original papers.

For a general choice of the functions $\Phi_1(u)$ and $\Phi_2(u)$ in (\ref{6}) (e.g. choosing values $\alpha_1,\alpha_2$ in (\ref{7a}) different from unity)  the corresponding vector field is not gradient, and the associated dynamics is not relaxational, which was precisely the idea behind introducing a variant of that model in \cite{CKDP}.  Our goal is to solve the problem of calculating the mean number of equilibria for such type of  general non-relaxational dynamics on the sphere. In doing it we will develop a new method which allows such a calculation to be done most straightforwardly, and is based on efficiently incorporating the Langrange multiplier into the Kac-Rice framework. Below we give a summary of the main results of the paper.

Our main result is the following
\begin{theorem}\label{Thm1}
 Define the Gaussian Elliptic Ensemble of $N\times N$ random real Gaussian matrices $X_N$ whose entries $X^{(N)}_{ij}$ have zero mean and covariance
$\left\langle X^{(N)}_{ij}X^{(N)}_{nm} \right\rangle=\delta_{in}\delta_{jm}+  \tau \delta_{jn}\delta_{im}$ with a real parameter $\tau\in [-1,1]$.
 Equivalently, the corresponding joint probability density of $X_{N}$ is given by
\begin{equation}\label{11}
\mathcal{P}(X_N=X)=\mathcal{Z}_N^{-1}\exp\Big[-\frac{1}{2(1-\tau^2)}\Tr \big( X X^T-\tau X^2\big) \Big],
\end{equation}
where $\mathcal{Z}_N$ is the associated normalization constant
\[
\mathcal{Z}_N=2^{N/2}\pi^{N(N+1)/2}(1+\tau)^{N(N+1)/4}(1-\tau)^{N(N-1)/4} \, .
\]
Denote by $\rho_{N}^{(r)}(x)$ the density of real eigenvalues of such matrices averaged over all realisations of $X_N$. It is convenient to normalize $\rho_{N}^{(r)}(x)$  in such a way that  $\int_{\alpha}^{\beta} \rho^{(r)}_{N}(x)\,dx$ gives the average number of real eigenvalues of $X$ in the interval $[\alpha,\beta]$. Then the mean number of the points of equilibria for the associated dynamical system (\ref{3}) with the Gaussian vector field described by
(\ref{6})-(\ref{6a}) and ${\bf h}$ described by (\ref{6b}) is given by
\begin{equation}\label{12}
\mathbb{E}\left\{\mathcal{N}_{tot}\right\}=2\sqrt{N\frac{1+\tau}{b^2+\tau}}\, b^{1-N} \int_{-\infty}^{\infty}e^{-\frac{N}{4}\lambda^2\,B}\rho^{(r)}_{N}(\lambda \sqrt{N})\,d\lambda
\end{equation}
where
\begin{equation}\label{12a}
\tau=\frac{\Phi_2(1)}{\Phi_1'(1)}, \quad b^2=\frac{\sigma^2+\Phi_1(1)}{\Phi_1'(1)}, \quad B=\frac{2}{1+\tau}\,\frac{1-b^2}{b^2+\tau}
\end{equation}

\end{theorem}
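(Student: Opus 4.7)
The plan is to apply the Kac-Rice formula in an extended $(N+1)$-dimensional phase space, treating the Lagrange multiplier $\lambda$ as an additional coordinate alongside $\mathbf{x}$. Specifically, the equilibria (\ref{3})-(\ref{4}) are precisely the zeros of the vector field $G=(G_1,\ldots,G_{N+1})\colon\mathbb{R}^N\times\mathbb{R}\to\mathbb{R}^{N+1}$ with $G_k(\mathbf{x},\lambda)=-\lambda x_k+h_k+f_k(\mathbf{x})$ for $k\le N$, and $G_{N+1}(\mathbf{x},\lambda)=|\mathbf{x}|^2-N$. The standard Kac-Rice representation then reads
\begin{equation*}
\mathbb{E}[\mathcal{N}_{tot}]=\int d\mathbf{x}\,d\lambda\,\mathbb{E}\bigl[\,|\det J(\mathbf{x},\lambda)|\,\delta^{(N+1)}(G)\bigr],
\end{equation*}
where $J$ is the $(N+1)\times(N+1)$ Jacobian of $G$.

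First I would exploit isotropy: the covariances (\ref{6}) and (\ref{6b}) are rotationally invariant, so the integrand depends on $\mathbf{x}$ only through $|\mathbf{x}|$, which the delta $\delta(G_{N+1})$ pins to $\sqrt{N}$. Rotating to $\mathbf{x}=\sqrt{N}\,\mathbf{e}_1$ costs a factor equal to the area of $S^{N-1}_{\sqrt{N}}$. Next, at that fixed point I would compute the joint Gaussian law of $f(\mathbf{x})$, the gradient matrix $M_{ij}=\partial_j f_i(\mathbf{x})$ and $\mathbf{h}$ by differentiating (\ref{6}) once and twice in both arguments; the delta functions $\delta(G_k)$ for $k\le N$ together with the independent Gaussian $h_k$ effectively condition the vector $f_k(\mathbf{x})$ to the value $\lambda x_k-h_k$. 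A key algebraic check is that after this conditioning the lower-right $(N-1)\times(N-1)$ block of $M$ decouples from the remaining randomness and inherits a Gaussian law with covariance $\Phi_1'(1)\delta_{ik}\delta_{j\ell}+\Phi_2(1)\delta_{i\ell}\delta_{jk}$, matching the Elliptic Ensemble of Theorem \ref{Thm1} up to the scale $\sqrt{\Phi_1'(1)}$ and with asymmetry $\tau=\Phi_2(1)/\Phi_1'(1)$. The conditional mean produces only an identity shift, which will combine cleanly with $-\lambda I$.

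The Jacobian reduction is elementary linear algebra: at $\mathbf{x}=\sqrt{N}\,\mathbf{e}_1$ the matrix
\begin{equation*}
J=\begin{pmatrix} M-\lambda I_N & -\mathbf{x} \\ 2\mathbf{x}^T & 0 \end{pmatrix}
\end{equation*}
has last column and last row supported only on their leading entries, so one-step expansion along each gives $|\det J|=2N\,|\det(\widetilde M-\lambda I_{N-1})|$ with $\widetilde M$ the lower-right block of $M$. At this point the remaining task is to evaluate $\mathbb{E}|\det(\widetilde M-\lambda I_{N-1})|$ for $\widetilde M$ in the (shifted, rescaled) Elliptic Ensemble. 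Here I would invoke the Edelman-Kostlan-Shub / Fyodorov-type identity expressing the expected absolute characteristic polynomial of an Elliptic Ensemble matrix of size $N-1$ at real argument as a Gaussian-weighted value of the mean real eigenvalue density $\rho^{(r)}_N$ of the same ensemble at size $N$; after restoring all scalings of $\lambda$ and of the matrix entries, the Gaussian weight takes the form $e^{-\frac{N}{4}\lambda^2 B}$ with $B$ exactly as in (\ref{12a}), and the overall prefactor assembles into $2\sqrt{N(1+\tau)/(b^2+\tau)}\,b^{1-N}$.

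The main obstacle will be the careful Gaussian bookkeeping in the conditioning step: one has to verify that the conditional covariance of $\widetilde M$ is unchanged from its unconditional one (so that $\tau=\Phi_2(1)/\Phi_1'(1)$ emerges cleanly), and that the contributions of $\sigma^2$, $\Phi_1(1)$ and the Lagrange-multiplier $\delta$-functions assemble into precisely the combination $b^2=(\sigma^2+\Phi_1(1))/\Phi_1'(1)$ in the normalization constant and into the Gaussian weight governed by $B$. Everything else—the rotational symmetry, the determinant expansion, and the identity for $\mathbb{E}|\det(\cdot)|$—is routine once the Gaussian law of $M$ has been disentangled.
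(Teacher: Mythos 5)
Your proposal follows essentially the same route as the paper: the Kac--Rice formula in the extended $(N+1)$-dimensional space with $\lambda$ as an extra coordinate, rotational invariance to fix ${\bf x}=\sqrt{N}\,{\bf e}_1$, the reduction $|\det J|=2N\,\big|\det\big(\widetilde M-\lambda{\bf 1}_{N-1}\big)\big|$ with the lower-right block $\widetilde M$ carrying the elliptic-ensemble law with $\tau=\Phi_2(1)/\Phi_1'(1)$ (and, as you anticipate, decoupling from ${\bf f}$ at that point — in fact the conditional shift is exactly zero), followed by the Edelman--Kostlan--Shub/Fyodorov--Khoruzhenko identity converting $\mathbb{E}\left|\det\left(X-\lambda\sqrt{N}{\bf 1}_{N-1}\right)\right|$ into the real-eigenvalue density $\rho^{(r)}_N$. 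The only difference is organizational: the paper performs the Gaussian marginalization via Fourier transforms of the joint density of $({\bf f},K)$, whereas you use direct Gaussian conditioning, which is an equivalent computation.
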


\begin{remark}
Inequalities (\ref{6a}) ensure that the parameter $\tau$ defined in (\ref{12a}) satisfies $-1\le \tau\le 1$, and further $b^2+\tau\ge 0$. It is clear from (\ref{9a})
that purely gradient 'relaxational' dynamics always corresponds to the value $\tau=1$, so that the value of $\tau$ can be used to control
the degree to which the dynamics described by the system of ODE's (\ref{2}) is not-relaxational. As can be inferred
from the example (\ref{7})-(\ref{7a}) the value $\tau=-1$ is exceptional and only possible for linear ($J_2=0$) purely antisymmetric $\alpha_1=-1$ fields.
In what follows we will assume $\tau>-1$.  Similarly, the case $b^2+\tau=0$ is exceptional and will be excluded as well, making the parameter $B$ in (\ref{12a}) well-defined.
\end{remark}
\begin{remark}
We shall see that replacing in the right hand side of (\ref{12})  the integration domain $\lambda\in (-\infty,\infty)$ by an interval $\lambda\in [\alpha/\Phi_1'(1),\beta/\Phi'(1)]$ gives the mean number of the points of equilibria such that the associated Lagrange multipliers (\ref{5}) take values in the interval $[\alpha,\beta]$.
\end{remark}

\vspace{0.5cm}

The representation (\ref{12}) is valid for any $N\ge 1$ but naturally we are mostly interested in the asymptotics for large systems, that is $N\gg 1$.
 The asymptotic behaviour of the  density $\rho^{(r)}_{N}(x)$ is well-understood \cite{ForNag2008}, see also \cite{FK2016}, which allows us to find the leading asymptotic behaviour of $\mathbb{E}\left\{\mathcal{N}_{tot}\right\}$ for given $N-$independent values of the parameters $\tau$ and $b$.
 To make our analysis most simple we will assume in the rest of the paper that $N$ is an even integer.
 We will show that
 \begin{proposition}\label{Prop1}
 For any fixed $|\tau|<1$ and $b<1$ asymptotically as $N\gg 1$ holds
  \begin{equation}\label{13}
\mathbb{E}\left\{\mathcal{N}_{tot}\right\}=2\sqrt{\frac{1+\tau}{1-\tau}}\, \frac{b}{\sqrt{1-b^2}}\, \exp{\left(N\ln{\frac{1}{b}}\right)}\,,
\end{equation}
whereas for $b>1$ and any $-1<\tau\le 1$
\begin{equation}\label{13a}
\lim_{N\to \infty} \mathbb{E}\left\{\mathcal{N}_{tot}\right\}=2
\end{equation}
\end{proposition}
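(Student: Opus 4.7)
The plan is to start from the integral representation (\ref{12}) in Theorem \ref{Thm1} and to insert the known large-$N$ asymptotics of the real eigenvalue density $\rho^{(r)}_{N}(x)$ of the real elliptic Ginibre ensemble (see \cite{ForNag2008} and the summary given in \cite{FK2016}). The two facts I need are: (i) in the bulk, i.e.\ for $x=\lambda\sqrt{N}$ with $|\lambda|<1+\tau$, one has $\rho^{(r)}_{N}(\lambda\sqrt{N})\to (2\pi(1-\tau^2))^{-1/2}$ uniformly on compact subintervals; (ii) outside the bulk, $|\lambda|>1+\tau$, the density decays exponentially in $N$, and near the edges $\lambda=\pm(1+\tau)$ the crossover is controlled by an explicit error-function-type profile. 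I will analyse the two regimes $b<1$ and $b>1$ separately, since they correspond to $B>0$ and $B<0$ in (\ref{12a}) respectively, and hence to qualitatively different behaviour of the weight $e^{-N\lambda^{2}B/4}$.

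For $b<1$ one has $B>0$ and the Gaussian weight concentrates $\lambda$ in a window of width $\mathcal{O}(1/\sqrt{N})$ around $0$, which lies strictly inside the bulk. I would replace $\rho^{(r)}_{N}(\lambda\sqrt{N})$ by its bulk value $(2\pi(1-\tau^2))^{-1/2}$, extend the integration to $\mathbb{R}$ at the cost of an exponentially small error, and evaluate the resulting Gaussian integral, giving $\sqrt{4\pi/(NB)}$. Substituting
$(1-\tau^{2})B=\frac{2(1-\tau)(1-b^{2})}{b^{2}+\tau}$
into the prefactor of (\ref{12}), the factors of $\sqrt{N}$, of $b^{2}+\tau$ and of $1\pm\tau$ combine cleanly and produce exactly the stated prefactor $2\sqrt{(1+\tau)/(1-\tau)}\,b/\sqrt{1-b^{2}}$, while the $b^{1-N}$ term produces the exponential $e^{N\ln(1/b)}$, yielding (\ref{13}).

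For $b>1$ the sign of $B$ is reversed, so $e^{-N\lambda^{2}B/4}$ grows with $|\lambda|$ and the integral becomes dominated by the edges $\lambda=\pm(1+\tau)$ of the support of $\rho^{(r)}_{N}$, together with a short tail $|\lambda|>1+\tau$ in which $\rho^{(r)}_{N}$ decays. The strategy is a local analysis at each edge: zooming in with $\lambda=\pm(1+\tau)+t/\sqrt{N}$ and expanding $-\tfrac{N}{4}\lambda^{2}B$ to leading and subleading order, one finds
\[
-\tfrac{N}{4}\lambda^{2}B\big|_{\text{edge}}=-N\ln b+\text{const},
\]
that is, the leading exponential of the Gaussian weight at the edge exactly cancels the $b^{1-N}$ prefactor (this can be checked from the identity $(1+\tau)^{2}B/4=\frac{(1+\tau)(b^{2}-1)}{2(b^{2}+\tau)}$ together with a careful comparison to $\ln b$ via the concavity argument behind $B>0$ vanishing precisely at $b=1$). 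After this cancellation, the remaining local integral over $t$ involves the edge profile of $\rho^{(r)}_{N}$ (the complementary-error-function part of the Forrester--Nagao formula), and contributes exactly $1$ per edge. Summing the two edges and combining with the explicit $\sqrt{N}$-factors in the prefactor gives the stated limit $2$.

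The routine part is the $b<1$ computation: it reduces to a single Gaussian moment. The main obstacle is the $b>1$ case, because three different large factors have to cancel: the $b^{1-N}$ and $\sqrt{N}$ in the prefactor of (\ref{12}), the growing Gaussian weight, and the exponential suppression of $\rho^{(r)}_{N}$ beyond the spectral edge. Handling this cleanly requires the precise sub-Gaussian edge profile of the real eigenvalue density (not just its bulk value or crude tail bounds), and it is at this step that one must verify that the prefactor conspires to give the integer $2$ corresponding to the two trivial equilibria $\pm\sqrt{N}\,\mathbf{h}/|\mathbf{h}|$.
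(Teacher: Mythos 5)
Your treatment of the case $b<1$ is correct and is exactly the paper's argument: for $B>0$ the weight localizes $\lambda$ near the origin, one substitutes the bulk value $\rho^{(r)}_N(\lambda\sqrt N)\simeq (2\pi(1-\tau^2))^{-1/2}$, evaluates the Gaussian integral, and the prefactors combine to give (\ref{13}).

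The case $b>1$, however, contains a genuine error. For \emph{fixed} $b>1$ the integral in (\ref{12}) is \emph{not} dominated by the spectral edges $\lambda=\pm(1+\tau)$, and your claimed cancellation $-\tfrac{N}{4}\lambda^2 B\big|_{\rm edge}=-N\ln b+{\rm const}$ is false. Combining the weight at the edge with $b^{1-N}$ gives the exponent $N\big[\tfrac{(1+\tau)(b^2-1)}{2(b^2+\tau)}-\ln b\big]$, and this bracket is strictly negative for every $b>1$ (it vanishes only at $b=1$; e.g.\ for $\tau=0$, $\tfrac{b^2-1}{2b^2}<\ln b$). Hence the edge contribution is exponentially small, and a local $\lambda=\pm(1+\tau)+t/\sqrt N$ analysis with the Forrester--Nagao edge profile cannot produce the limit $2$; that picture is the one relevant to the critical scaling $b^2=1+\kappa/\sqrt N$ of Proposition \ref{Prop3}, not to fixed $b>1$. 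What actually happens, and what the paper does, is a saddle-point (Laplace) analysis \emph{beyond} the edge: one needs the precise large-deviation form of the real-eigenvalue density outside the bulk, $\rho^{(r)}_N(\lambda\sqrt N)\simeq Q(\lambda)e^{-N\Psi(\lambda)}$ as in (\ref{39b})--(\ref{39b2}), and one maximizes $\mathcal L(\lambda)=-\tfrac{B}{4}\lambda^2-\Psi(\lambda)$ over $\lambda>1+\tau$. The maximum sits at $\lambda_*=b+\tau/b>1+\tau$, where the growing Gaussian weight exactly balances the exponential suppression of the density, with $\mathcal L(\lambda_*)=\ln b$ cancelling $b^{1-N}$; the Gaussian fluctuation factor $Q(\lambda_*)\big(N|\mathcal L''(\lambda_*)|\big)^{-1/2}=b^{N-1}\sqrt{(b^2+\tau)/(4N(1+\tau))}$ then conspires with the prefactor $2\sqrt{N(1+\tau)/(b^2+\tau)}\,b^{1-N}$ of (\ref{12}) to give exactly $1$ per tail, hence the limit $2$ after adding the symmetric contribution from $\lambda<-(1+\tau)$. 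So the missing ingredient in your proposal is precisely this sub-leading tail asymptotics $\Psi(\lambda)$ and the off-edge saddle $\lambda_*$; without it the three competing exponential factors cannot be balanced correctly.
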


The above behaviour describes precisely the phenomenon of the "topology trivialization"  \cite{FLD2013,my2015}, with the role of the control parameter played by the growing variance  of the magnetic field $\sigma$. Namely, using the definition of the parameter $b$ in (\ref{12a}) and
the first of the conditions (\ref{6a}) we see that for $\sigma<\sigma_c=\sqrt{\Phi_1'\left(1\right)-\Phi_1\left(1\right)}$ the equilibria are generically exponentially abundant, whereas for $\sigma>\sigma_c$ their number abruptly drops down to the minimal possible value $2$. From this point of view one may
speak about a  'topology trivialization phase transition' taking palce at the critical value $\sigma=\sigma_c$.
One can further analyse how this transition happens on a finer scale for large but finite values of $N$ by "zooming in" the appropriately scaled critical regime around  $\sigma=\sigma_c$.
Such an analysis reveals that the transition happens in two stages: first in the regime $|\sigma_c-\sigma|\sim N^{-1}$
the mean number of equilibria drops from exponentially many down to values of order of $\sqrt{N}$, and then for even larger fields such that $\sigma>\sigma_c$
and $\sigma-\sigma_c\sim N^{-1/2}$ the number further drops from  values of order of $\sqrt{N}$ to values of order of unity. This behaviour is summarized in
the following two propositions:
 \begin{proposition}\label{Prop2}
 Fix any $|\tau|<1$ and for $N\to \infty$ scale the parameter $b$ with $N$ as $b^2=1-\frac{\gamma}{N}$, with the parameter $-\infty<\gamma<\infty$ being fixed.
 Then
  \begin{equation}\label{14}
\lim_{N\to \infty} \frac{1}{\sqrt{N}}\mathbb{E}\left\{\mathcal{N}_{tot}\right\}=4\sqrt{\frac{1}{2\pi}}\sqrt{\frac{1+\tau}{1-\tau}}\, e^{\frac{\gamma}{2}}\int_0^1e^{-\frac{\gamma}{2}\lambda^2}d\lambda
\end{equation}
\end{proposition}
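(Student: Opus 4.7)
The plan is to derive Proposition~\ref{Prop2} directly from the exact formula (\ref{12}) by tracking the asymptotic behaviour of each factor under the scaling $b^{2}=1-\gamma/N$, and then invoking the known bulk limit of the real eigenvalue density of the elliptic ensemble.

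First I would compute the three pieces of (\ref{12}) separately. The algebraic prefactor satisfies
$\sqrt{N(1+\tau)/(b^{2}+\tau)}\to\sqrt{N}$ since $b^{2}+\tau\to 1+\tau$. The exponential prefactor $b^{1-N}=b\cdot b^{-N}$ converges to $e^{\gamma/2}$ because $b\to 1$ and $(1-\gamma/N)^{-N/2}\to e^{\gamma/2}$. Similarly, the coefficient in the integrand expands as
\[
B=\frac{2(1-b^{2})}{(1+\tau)(b^{2}+\tau)}=\frac{2\gamma/N}{(1+\tau)(1+\tau-\gamma/N)}=\frac{2\gamma}{N(1+\tau)^{2}}\bigl(1+o(1)\bigr),
\]
so $\tfrac{N\lambda^{2}}{4}B\to\tfrac{\gamma\lambda^{2}}{2(1+\tau)^{2}}$ pointwise (and uniformly on compacts).

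Next I would use the known asymptotic profile of $\rho^{(r)}_{N}$ for the real elliptic ensemble with $|\tau|<1$, as given in \cite{ForNag2008} (see also \cite{FK2016}): at the macroscopic scale $x=\sqrt{N}\lambda$ the bulk density has the limit
\[
\lim_{N\to\infty}\rho^{(r)}_{N}(\sqrt{N}\lambda)=\frac{1}{\sqrt{2\pi(1-\tau^{2})}}\mathbb{1}_{|\lambda|<1+\tau},
\]
together with a uniform $N$-bound in the bulk and sub-Gaussian decay outside the ellipsoidal support. Combined with the boundedness of $e^{-N\lambda^{2}B/4}\le 1$, this gives a dominating function so that dominated convergence applies, yielding
\[
\int_{-\infty}^{\infty}e^{-\frac{N}{4}\lambda^{2}B}\rho^{(r)}_{N}(\sqrt{N}\lambda)\,d\lambda\longrightarrow\frac{1}{\sqrt{2\pi(1-\tau^{2})}}\int_{-(1+\tau)}^{1+\tau}e^{-\gamma\lambda^{2}/[2(1+\tau)^{2}]}\,d\lambda.
\]

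Finally I would rescale $\lambda=(1+\tau)\mu$ in this limiting integral, which produces exactly the factor $2(1+\tau)\int_{0}^{1}e^{-\gamma\mu^{2}/2}\,d\mu$. Multiplying by the limits $2\sqrt{N}e^{\gamma/2}$ of the prefactors, dividing by $\sqrt{N}$, and simplifying
$2(1+\tau)/\sqrt{2\pi(1-\tau^{2})}=2\sqrt{(1+\tau)/[2\pi(1-\tau)]}$ produces precisely the right-hand side of (\ref{14}). The only delicate point is the interchange of limit and integral: the density has a nontrivial edge transition on the scale $|\lambda-(1+\tau)|\sim N^{-1/2}$, but since the edge region carries $O(1)$ mass (not growing with $N$) and the exponential factor is bounded there, this contribution is absorbed by any uniform bulk bound on $\rho^{(r)}_{N}(\sqrt{N}\lambda)$; I expect this to be the main technical step, and it is already under control in the references cited.
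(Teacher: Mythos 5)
Your proposal is correct and follows essentially the same route as the paper's own (sketched) proof: both start from (\ref{12}), note $NB\to\gamma/(1+\tau)^2$ and $b^{-N}\to e^{\gamma/2}$, insert the bulk density limit (\ref{39a}) on $|\lambda|<1+\tau$, and finish with the rescaling $\lambda\to(1+\tau)\lambda$; your treatment of the edge/tail contributions is in fact more explicit than the paper's. The only small caveat is that your dominating bound $e^{-N\lambda^2 B/4}\le 1$ is valid only for $\gamma\ge 0$ (i.e.\ $B\ge 0$); for $\gamma<0$ one should instead bound this factor by $e^{|\gamma|/2+o(1)}$ on the bulk and use the exponential-in-$N$ decay (\ref{39b}) of $\rho^{(r)}_N(\lambda\sqrt{N})$ outside it, which your tail remark essentially already covers.
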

and
\begin{proposition}\label{Prop3}
Fix any $|\tau|<1$ and for $N\to \infty$ scale the parameter $b$ with $N$ as $b^2=1+\frac{\kappa}{\sqrt{N}}$, with the parameter $\kappa>0$ being fixed. Then
\begin{equation}\label{15}
\lim_{N\to \infty} \mathbb{E}\left\{\mathcal{N}_{tot}\right\}=4e^{-\frac{\tilde{\kappa}^2}{4}}\int_{-\infty}^{\infty} e^{\tilde{\kappa}\zeta}
\rho^{(r)}_{edge}(\zeta)\,d\zeta
\end{equation}
where we defined $\tilde{\kappa}=\kappa\sqrt{\frac{1-\tau}{1+\tau}}$ and
\begin{equation}\label{15a}
\rho_{edge}^{(r)}(\zeta)=\frac{1}{2\sqrt{2\pi}}\left\{
erfc (\sqrt{2}\zeta) +\frac{1}{\sqrt{2}}e^{-\zeta^2}\left[1+erf\left(\zeta\right)\right]\right\}
\end{equation}
with $erf(x)=1-erfc(x)= \frac{2}{\sqrt{\pi}}\int_0^xe^{-t^2}\,dt$.
\end{proposition}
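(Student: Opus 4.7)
The plan is to extract the large-$N$ asymptotics of the exact representation \eqref{12} by a Laplace-type analysis concentrated at the two spectral edges $\lambda = \pm(1+\tau)$ of the real density $\rho^{(r)}_N$.

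First I would substitute $b^{2} = 1 + \kappa/\sqrt{N}$ and expand the $N$-dependent prefactors. A short Taylor computation yields
\[
b^{1-N} = \exp\!\Bigl(-\tfrac{1}{2}\kappa\sqrt{N} + \tfrac{1}{4}\kappa^{2} + O(N^{-1/2})\Bigr),\qquad \sqrt{\tfrac{N(1+\tau)}{b^{2}+\tau}} = \sqrt{N}\,(1+o(1)),
\]
together with $B = -\tfrac{2\kappa}{\sqrt{N}(1+\tau)^{2}} + O(N^{-1})$. Since $\kappa>0$ forces $B<0$, the weight $e^{-N\lambda^{2}B/4}$ grows exponentially in $\sqrt{N}$ and is monotone in $|\lambda|$. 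Combined with the fact that $\rho^{(r)}_{N}$ has essential support $|\lambda|\le 1+\tau$, this immediately identifies the two spectral edges as the only regions producing a leading-order contribution.

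Next I would zoom in near the right edge via $\lambda\sqrt{N} = \sqrt{N}(1+\tau) + \zeta'$ and invoke the real edge-scaling limit for the elliptic GinOE established in \cite{ForNag2008},
\[
\rho^{(r)}_{N}\bigl(\sqrt{N}(1+\tau)+\zeta'\bigr) \;\longrightarrow\; \frac{1}{\sqrt{1-\tau^{2}}}\,\rho^{(r)}_{edge}\!\left(\frac{\zeta'}{\sqrt{1-\tau^{2}}}\right),
\]
with $\rho^{(r)}_{edge}$ as in \eqref{15a}. Changing variables to $\zeta = \zeta'/\sqrt{1-\tau^{2}}$ so that $d\lambda = d\zeta/\sqrt{N}$ and Taylor-expanding the quadratic, one finds
\[
-\frac{N\lambda^{2}B}{4} = \frac{\sqrt{N}\kappa}{2} + \tilde{\kappa}\,\zeta - \frac{\kappa^{2}}{2(1+\tau)} + O(N^{-1/2}),
\]
the coefficient of $\zeta$ being exactly $\tilde\kappa = \kappa\sqrt{(1-\tau)/(1+\tau)}$ thanks to $\sqrt{1-\tau^{2}}/(1+\tau) = \sqrt{(1-\tau)/(1+\tau)}$. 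The $\sqrt{N}$-order terms cancel against those in $b^{1-N}$, while the remaining $O(1)$ constants combine as
\[
\frac{\kappa^{2}}{4} - \frac{\kappa^{2}}{2(1+\tau)} = -\frac{\kappa^{2}(1-\tau)}{4(1+\tau)} = -\frac{\tilde\kappa^{2}}{4}.
\]
The left edge $\lambda = -(1+\tau)$ contributes identically by the parity $\rho^{(r)}_{N}(-x) = \rho^{(r)}_{N}(x)$ (with the change $\zeta\to-\zeta$), giving an overall factor $2$. Collecting the prefactor $2\sqrt{N}\cdot b^{1-N}$ with the two edge integrals (each carrying a $1/\sqrt{N}$ Jacobian) reproduces precisely $4\,e^{-\tilde\kappa^{2}/4}\int e^{\tilde\kappa\zeta}\rho^{(r)}_{edge}(\zeta)\,d\zeta$, which is \eqref{15}.

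The main technical obstacle is justifying cleanly that only the edge neighbourhoods contribute. Two inputs are required: (i) a uniform estimate on $\rho^{(r)}_{N}$ in the far-tail region $|\lambda|>1+\tau+\delta$, where the density must decay fast enough to defeat the $\sqrt{N}$-exponential growth of $e^{-N\lambda^{2}B/4}$, and (ii) the matching of $\rho^{(r)}_{edge}(\zeta)$ as $\zeta\to-\infty$ to the bulk constant $1/\sqrt{2\pi}$, which is what allows one to extend the $\zeta$-integration to all of $\mathbb{R}$ rather than only over a truncated window around the edge. Both ingredients are available from the finite-$N$ Pfaffian formulas for the elliptic real Ginibre density in \cite{ForNag2008}; the delicate point is ensuring that the passage to the limit is sufficiently uniform in $\kappa$ to yield the explicit integral displayed in \eqref{15}.
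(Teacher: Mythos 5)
Your proposal is correct and follows essentially the same route as the paper's proof: dominance of the two spectral edges $\lambda=\pm(1+\tau)$, the Forrester--Nagao edge-scaling form of $\rho^{(r)}_N$, the expansion of $b^{1-N}e^{-N\lambda^{2}B/4}$ into $e^{-\tilde\kappa^{2}/4}e^{\tilde\kappa\zeta}$ after the cancellation of the $O(\sqrt{N})$ terms, and the factor $2$ from the symmetric left edge. Your explicit inclusion of the $1/\sqrt{1-\tau^{2}}$ normalization in the edge-scaling limit (which cancels the Jacobian $\sqrt{1-\tau^{2}}/\sqrt{N}$ and matches the bulk constant $1/\sqrt{2\pi(1-\tau^{2})}$) is precisely what is needed to land on (\ref{15}); the paper's sketch states this step more tersely, and your remarks on tail estimates and bulk matching only add rigour to the same argument.
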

These two formulas fully describe the crossover between (\ref{13}) and (\ref{13a}). Indeed, consider the parameter $\kappa$ in (\ref{15}) to be large: $\kappa\gg 1$. It is easy to see that in such a limit the integral in (\ref{15}) is dominated by the region of large positive $\zeta \gg 1$ where we can use the asymptotics $\rho_{edge}^{(r)}(\zeta)\approx \frac{1}{2\sqrt{\pi}} e^{-\zeta^2}$. The integral then becomes effectively Gaussian and one immediately finds that the right-hand side of (\ref{15}) tends to the value $2$ thus matching the value in (\ref{13a}). On the other hand, consider  the parameter $\gamma$ in (\ref{14}) to be positive and large: $\gamma\gg 1$. The integral over $\lambda$ is then approximately equal to $\sqrt{\pi/2\gamma}$ and the resulting expression exactly matches (\ref{13}) after replacing $b^2=1-\frac{\gamma}{N}$ in the latter.

Finally, let us satisfy ourselves that the limit $\gamma \to -\infty$ in (\ref{14}) exactly matches the limit $\kappa \to 0$ in (\ref{15}) so that indeed
all possible regimes are covered by these two equations. It is easily seen that for the big negative $\gamma$ the equation  (\ref{14}) implies for the mean number of critical points
\begin{equation}\label{16}
\mathbb{E}\left\{\mathcal{N}_{tot}\right\}\approx 4\sqrt{\frac{1}{2\pi}}\sqrt{\frac{1+\tau}{1-\tau}}\, \frac{\sqrt{N}}{|\gamma|}
\end{equation}
On the other hand, for $\kappa \to 0$ in (\ref{15}) the integral is dominated by the values $\zeta \to -\infty$ where $\rho_{edge}^{(r)}(\zeta)\approx \frac{1}{\sqrt{2\pi}}$. Evaluating that integral asymptotically one indeed reproduces (\ref{16}) after identifying $|\gamma|=\kappa\sqrt{N}$ as expected.

Finally, one may notice that the asymptotics (\ref{13}) is only valid for $\tau<1$ and needs to be modified when $\tau\to 1$.
Recall that the value $\tau=1$ corresponds to purely gradient flows. The gradient limit $\tau=1$ can be approached by scaling $\tau$ with $N$ appropriately and in such a regime of a  weakly non-gradient flow
the asymptotics (\ref{13}) is replaced by the following:
\begin{proposition}\label{Prop4}
 Let $\tau=1-\frac{u^2}{N}, \, 0\le u<\infty$ and consider $b^2<1$. Then for large $N\gg 1$ asymptotically
\begin{equation}\label{17}
\mathbb{E}\left\{\mathcal{N}_{tot}\right\} = 4 \, e^{N\ln{\frac{1}{b}}} \,\sqrt{\frac{2N b^2}{\pi(1-b^2)}}\int_0^{1}\,e^{-u^2p^2} dp \,
 \end{equation}
 \end{proposition}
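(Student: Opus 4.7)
The plan is to start from the exact formula~(\ref{12}) of Theorem~\ref{Thm1} and to specialise each factor to the scaling $\tau = 1 - u^2/N$ with $b^2 < 1$ and $u\ge 0$ fixed as $N \to \infty$. The key structural observation is that the Gaussian weight $e^{-N\lambda^2 B/4}$ in (\ref{12}) concentrates $\lambda$ on the scale $1/\sqrt{N}$, so the natural eigenvalue variable is $y := \lambda\sqrt{N} = O(1)$. Since the elliptic ensemble has spectral support of size $\sim\!\sqrt{N}$, this $y$ lies deep in the bulk, near the centre of the spectrum, precisely where the crossover between the real Ginibre regime (with $|\tau|<1$ fixed) and the GOE regime ($\tau=1$) takes place.

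Operationally, I would first expand the algebraic prefactor using $1+\tau\to 2$ and $b^2+\tau \to 1+b^2$, giving
\[
2\sqrt{N\,\frac{1+\tau}{b^2+\tau}}\,b^{1-N} = 2b\sqrt{\frac{2N}{1+b^2}}\,e^{N\ln(1/b)}\,\bigl(1+o(1)\bigr),
\]
while $B = 2(1-b^2)/[(1+\tau)(b^2+\tau)]$ converges to $B_0 := (1-b^2)/(1+b^2)$. After the substitution $y=\lambda\sqrt{N}$, (\ref{12}) becomes
\[
\mathbb{E}\{\mathcal{N}_{tot}\} = 2b\sqrt{\frac{2N}{1+b^2}}\,e^{N\ln(1/b)}\cdot\frac{1}{\sqrt{N}}\int_{-\infty}^{\infty}e^{-y^2 B_0/4}\,\rho^{(r)}_N(y)\,dy\cdot(1+o(1)).
\]

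The decisive input is then a sharp asymptotic for $\rho^{(r)}_N(y)$ at bounded $y$ in the crossover scaling $\tau = 1 - u^2/N$. Starting from the Pfaffian/integral representation for the real-eigenvalue density of the elliptic ensemble (\cite{ForNag2008}, see also the discussion in \cite{FK2016}), a Laplace-type analysis in this scaling should yield
\[
\frac{\rho^{(r)}_N(y)}{\sqrt{N}}\;\longrightarrow\;\frac{1}{\pi}\int_0^1 e^{-u^2 p^2}\,dp,\qquad(y\text{ fixed},\ N\to\infty),
\]
where $p$ parametrises the incomplete Gaussian arising from the skew-orthogonal-polynomial kernel in the crossover. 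Two consistency checks support this form: at $u=0$ the right-hand side equals $1/\pi$, matching the GOE semicircle value $\rho_{sc}(0)$; and as $u\to\infty$ it asymptotes to $\tfrac{1}{2\sqrt{\pi}\,u}$, which upon restoring $u=\sqrt{N(1-\tau)}$ reproduces the known centre-of-spectrum density $\tfrac{1}{\sqrt{2\pi(1-\tau^2)}}$ of the elliptic ensemble for $|\tau|<1$.

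Inserting this limit density, performing the elementary Gaussian integral $\int e^{-y^2 B_0/4}\,dy = 2\sqrt{\pi(1+b^2)/(1-b^2)}$, and combining with the prefactor produces exactly (\ref{17}). The main obstacle is the rigorous derivation of the crossover density asymptotic above, specifically the identification of the one-parameter integral $\int_0^1 e^{-u^2 p^2}\,dp$ from the Pfaffian kernel: everything else amounts to routine bookkeeping of $O(1/N)$ corrections plus a one-dimensional Gaussian evaluation, with dominated convergence (justified by the Gaussian decay in $y$ together with polynomial bounds on $\rho^{(r)}_N$) handling the exchange of limit and integration.
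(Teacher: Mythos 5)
Your proposal is correct and takes essentially the same route as the paper: start from (\ref{12}), note that $B\to(1-b^2)/(1+b^2)>0$ forces concentration of $\lambda$ on the scale $N^{-1/2}$, insert the crossover density of real eigenvalues of the elliptic ensemble in the regime $\tau=1-u^2/N$ at the centre of the spectrum, and finish with a Gaussian integral and prefactor bookkeeping. The one input you flag as the main obstacle is exactly the result the paper quotes from \cite{FK2016} (its formula (\ref{44}), $\rho^{(r)}_N(\lambda\sqrt{N})\simeq\frac{\sqrt{N}}{\pi}\int_0^{\sqrt{1-\lambda^2/4}}e^{-u^2t^2}\,dt$ for $|\lambda|<2$), whose $\lambda\to 0$ value is precisely your $\frac{\sqrt{N}}{\pi}\int_0^1 e^{-u^2p^2}\,dp$, so it is available by citation rather than needing a fresh Pfaffian analysis; your consistency checks at $u=0$ and $u\to\infty$ correctly confirm its form.
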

 Note that in this limit the parameter $B$ defined in (\ref{12a}) is given by $B=\frac{1-b^2}{1+b^2}$ so that the above equation can be rewritten as
 \begin{equation}\label{17a}
\mathbb{E}\left\{\mathcal{N}_{tot}\right\} = 4 \, e^{\frac{N}{2}\ln{\frac{1+B}{1-B}}} \,\sqrt{\frac{N (1-B) }{\pi B}}\int_0^{1}\,e^{-u^2p^2} dp \,
 \end{equation}
  which in the pure gradient limit $u=0$ matches exactly the expression (60 ) from the paper \cite{my2015}.

\subsection{Discussion and open problems}

The results for counting the totality of equilibria for a random dynamical system (\ref{3}) describing generically (i.e. for $\tau\ne 1$) non-relaxational
dynamics on a high-dimensional sphere shows precisely the same type of behaviour as the system (\ref{2}) analyzed earlier in \cite{FK2016} in the Euclidean setting. In particular, the two models display a very similar 'topology trivialization' transition with changing the appropriate control parameter, the single-site relaxation rate in the Euclidean setting and the magnitude of the magnetic field in the spherical case.
From that point of view the two models belong essentially to the same universality class. The advantage of working on the sphere is that such systems can
be much easier simulated numerically for moderate to large values of the parameter $N$. For the gradient case of spherical model a numerical study of topology trivialization and related aspects were undertaken in \cite{Mehta1,Mehta2}. Note also general interest in landscape explorations for optimization and learning problems \cite{SGBL}; in that context the phenomenon of topology trivialization
was recently found to be of relevance for training 'deep learning' networks \cite{ChaudSoatto,Julius}.

The problem of classifying  each and every equilibrium point into locally stable or unstable seems a hard task.  Given the stochastic setup of our model the question about stability of individual equilibria  may be even the wrong question to ask, whereas addressing the statistics of the {\it number} of stable equilibria seems very appropriate. Unfortunately, the framework using the Lagrange multipliers, which as we have shown works very efficiently when counting all the stationary points of the dynamics, is not immediately adjustable to the problem of counting equilibria with prescribed number of unstable directions, or even only the stable equilibria. Note that for the simplest case $h_{k}=0$ and linear potential forces $f_k({\bf x})=-\sum_jJ_{kj}x_j, \, J_{jk}=J_{kj}$, the dynamics (\ref{3}) drives the time-dependent Lagrange multiplier (\ref{5}) towards the ever-smaller values, and the stable equilibrium corresponds to the value of $\lambda$ given by the smallest eigenvalue of the symmetric matrix $J_{kj}$.  Given that the methods of the present paper allow for counting the stationary points with values of the Lagrange multipliers in any given interval, an interesting question is to what extent (and if at all, under what conditions) stability of equilibria of a general nonlinear and non-potential system on the sphere may be judged by the values of its Lagrange multipliers.

 On the other hand, it should be possible to generalize approaches of \cite{my2015,Auf1,Auf2} to the non-potential dynamics and arrive
 at the ensemble average of the total number of stable equilibria, $\mathcal{N}_{stab}$, over all realisations of the vector field in terms of a random matrix integral.  In the limiting case of a purely gradient dynamics $\tau=1$ that integral was related to the probability density of the maximal eigenvalue of the GOE matrix \cite{Auf1,FyoNad}, with the latter being a well-studied object in the random matrix theory. This observation was used to evaluate $\mathbb{E}\left\{\mathcal{N}_{stab}\right\}$ for $N\gg 1$, see \cite{my2015} and references therein. One finds  that  $\mathbb{E}\left\{\mathcal{N}_{stab}\right\} \to 1$ if $b>1$, whilst if $b<1$ then, to the leading order in $N$,  the stable equilibria are still exponentially abundant, but their number is a vanishing fraction of the total number of equilibria. Thus, in the case of purely gradient dynamics large nonlinear autonomous systems assembled at random undergo an abrupt change from a typical phase portrait with a single stable equilibrium to a phase portrait dominated by an exponential number of unstable equilibria with an admixture of a smaller, but still exponential in $N$, number of stable equilibria.
 As was already mentioned in the introduction, in the case of a generic random non-potential flow in the Euclidean space  the mean number of stable equilibria was very recently analyzed in \cite{BFKunpub} using large deviation ideas.  It turned out that generically when  decreasing $\mu$ below the threshold value $\mu_c$ the system first transits from the '{\it absolute stability'} regime with a single stable equilibrium for $\mu>\mu_c$ over to '{\it absolute instability}' regime for $\mu_B<\mu<\mu_c$ where equilibria are exponentially abundant, but typically {\it all} of them are unstable. Finally, at even smaller relaxation rate $\mu<\mu_B$ stable equilibria become exponentially abundant, but their fraction to totality of all equilibria remains exponentially small. It is therefore an interesting question to adjust those methods to the present
 model on the sphere in the general case of non-gradient dynamics $\tau<1$.

  Although our investigation is concerned with the ensemble average of the number of equilibria  we expect that in the limit $N\to \infty$ the deviations of $\mathcal{N}_{tot}$ from its average $\mathbb{E}\left\{\mathcal{N}_{tot}\right\}$ are relatively small. The problem of estimating the deviation of   $\mathcal{N}_{tot}$  from its average value in the exponential abundance regime is an open and interesting question. In this context we  would like to mention the recent work of Subag \cite{Subag2015} who proved that  in the gradient  spherical model the deviations of  $\mathcal{N}_{tot}$ from its average are negligible in the limit of large system size.

   One may hope that numerical simulations of the model should be feasible, and understanding its non-relaxational dynamics rigorously in general setting
   is a challenging issue deserving further investigations, see \cite{CKDP,CriSomp}. Let us finally mention that the simplest, yet not fully trivial case is obtained by retaining only the first, linear, term in (\ref{7}). The case (referred to in the spin glass studies as $p=2$ case) is indeed very special, though its static and dynamic properties enjoyed over the years thorough attention, starting from the classical work \cite{KTJ} through the later papers
   \cite{CD1,CD2,ZKH} to the most recent results in \cite{FLD2013,DZ,BL2015,GT,FPS}. In particular, in this case  $ \Phi'(1)=\Phi(1)$ so that for $\sigma=0$  the parameter $b=1$. Hence for any fixed $\sigma>0$ we will have  $b>1$ implying only two points of equilibria and, as a consequence, a trivial exponential relaxation \cite{CD2}. To make the dynamics less trivial one needs to scale $\sigma\sim N^{-1}$ so that for $\tau<1$ the number of equilibria will be of  the order of $\sqrt{N}$ according to {\bf Proposition (\ref{Prop2})}.  An interesting question is then if this number of equilibria is already enough to support a nontrivial dynamics with aging affects \cite{CD1,ZKH,FPS} typical for the gradient counterpart of the problem without magnetic field, where the number of equilibria is of the order of $N$. The same scaling still holds for small magnetic field appropriately scaled with $N$ \cite{FLD2013}
     and is expected to hold for weakly non-relaxational dynamics like that considered in   {\bf Proposition (\ref{Prop4})} but with the parameter $b\to 1$ when $N\to \infty$. This limit is not covered by  {\bf Proposition (\ref{Prop4})} and its study is left for future research together with building extension of the analysis of \cite{FPS} to the non-relaxational case.

{\bf Acknowledgements}. The author is grateful to L. Cugliandolo for drawing his attention to \cite{CKDP} where the model studied in this paper was introduced and for further discussions on constrained dynamics.  The financial support by  EPSRC grant  EP/N009436/1 "The many faces of random characteristic polynomials" is acknowledged with thanks.

\section{Kac-Rice formulae within Lagrange multiplier framework: proof of the Theorem (\ref{Thm1})}

According to its very definition, every equilibrium point of the system (\ref{3}-\ref{4}) is associated with an $N+1$ component vector $\left(\begin{array}{c} {\bf x}\\ \lambda\end{array}\right)$,  with the values of the state vector ${\bf x}\in \mathbb{R}_N$ and the (time-independent) Lagrange multiplier
$-\infty<\lambda<\infty$  being chosen to  solve the system of $N+1$ algebraic equations
 \begin{equation}\label{18}
 -\lambda\, x_k+h_{k}+f_k({\bf x})=0, \quad k=1,\ldots,N \quad \mbox{and} \quad  \sum_{k=1}^N x_k^2=N
 \end{equation}

The non-linear system (\ref{18}) may have multiple solutions
whose number and locations depend on the realisation of the
random field {\bf f(x)} and parameters $h_k$. It is well-known that different solutions of such a system are generically, with probability one, isolated points in $\mathbb{R}^{N+1}$. Counting only solutions such that the corresponding values of the Lagrange multipliers $\lambda$ belong to an interval $[\alpha,\beta]$,  the total number of the solutions is given by the Kac-Rice metateorem, see e.g. \cite{math2},
 \begin{equation}\label{19}
\mathcal{N}_{tot} \!=\!\int_{[\alpha,\beta]} d\lambda \int_{\mathbb{R}^N} \!\! \left| \det\left( \begin{array}{cc} \frac{\partial f_k}{\partial x_l} -\lambda \delta_{kl} & -{\bf x} \\ 2{\bf x}^T & 0 \end{array}
\right) \right|\,\,\delta\left( \sum_{k=1}^N x_k^2-N\right) \prod_{k=1}^N
\delta\left(-\lambda \, x_k +h_k +\! f_k(\mathbf{x})\right) \,dx_k\, ,
 \end{equation}
where $\delta(u)$ stands for the Dirac $\delta$-distribution, and the $(N+1)\times (N+1)$ matrix under the determinant sign in the integrand is the Jacobian associated with the system (\ref{18}).

The number $\mathcal{N}_{tot}$ changes from one realization of the random field to another,
and the ultimate goal of the theory should be providing the distribution
of $\mathcal{N}_{tot}$. In such generality the problem is however very challenging, and in the present paper we restrict ourselves to the simplest nontrivial characteristics, the mean value $\mathbb{E}\left\{\mathcal{N}_{tot}\right\}$ which is given in {\bf Theorem (\ref{Thm1})}.
Proving it requires in the first place finding a way of performing the averaging of (\ref{18}) over the joint probability density $\mathcal{P}_{\bf{x}}\left({\bf f},K\right)$ of random vector ${\bf f}$ with $N$ components $f_k(\mathbf{x})$ and random $N\times N$ matrix $K$ with entries $K_{kl}=\frac{\partial f_k}{\partial x_l}$ where the derivatives are taken at the same point ${\bf x}$, as well as over the 'magnetic fields' $h_k$. Such an averaging can be indeed performed and its result is summarized in the following

\begin{theorem}\label{Thm2}
 Define the Gaussian Elliptic Ensemble of $N\times N$ random real Gaussian matrices $X_{N-1}$  as in (\ref{11}), but with the shift $N\to N-1$, and
  further define the parameters $\tau$ and $b^2$ as in (\ref{12a}). Assume that the Gaussian vector field is described by
(\ref{6})-(\ref{6a}) and ${\bf h}$ described by (\ref{6b}).
  Then the mean number of the points of equilibria  for the associated dynamical system (\ref{3}) such that the associated Lagrange multipliers (\ref{5}) take values in an interval $[\alpha,\beta]$ is given by
\begin{equation}\label{A}
\mathbb{E}\left\{\mathcal{N}_{tot}\right\}= \frac{1}{2^{\frac{N}{2}-1}\Gamma\left(\frac{N}{2}\right)}\frac{1}{\sqrt{b^2+\tau}} \, b^{1-N}
\int_{\alpha/\sqrt{\Phi_1'(1)}}^{\beta/\sqrt{\Phi_1'(1)}} e^{-N\frac{\lambda^2}{2(b^2+\tau)}\,}
\mathbb{E}\left\{\left| \det\left( \begin{array}{c} X -\lambda\sqrt{N}{\bf 1}_{N-1}   \end{array}
\right) \right|\right\}_{X_{N-1}}\, d\lambda
\end{equation}

\end{theorem}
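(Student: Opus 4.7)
The plan is to reduce (\ref{19}) to an Elliptic-Ensemble determinant by exploiting three features, in order: independence of $\mathbf h$ from $(\mathbf f,K)$; rotational covariance of the joint law (\ref{6})-(\ref{6b}); and an algebraic decorrelation, at an axis-aligned base point, between the transverse block of the Jacobian $K=(\partial_l f_k)$ and the field $\mathbf f$. First I would average (\ref{19}) over $\mathbf h$ alone: since $\mathbf h$ enters only through the delta functions, this replaces $\prod_k\delta(-\lambda x_k+h_k+f_k)$ by the explicit Gaussian weight $(2\pi\sigma^2)^{-N/2}\exp\bigl(-\sum_k(\lambda x_k-f_k)^2/(2\sigma^2)\bigr)$. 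Next, invoking rotational covariance of (\ref{6})-(\ref{6b}) under $\mathbf x\mapsto O\mathbf x$, $f\mapsto Of$, $h\mapsto Oh$, the integrand depends on $\mathbf x$ only through $|\mathbf x|$, so the sphere integral collapses to evaluation at $\mathbf x_0=\sqrt N\,\mathbf e_1$ multiplied by the surface factor $|S^{N-1}|N^{(N-2)/2}/2$.

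At $\mathbf x_0$ the Kac-Rice Jacobian is $M=\left(\begin{smallmatrix}K-\lambda I_N & -\sqrt N\,\mathbf e_1\\ 2\sqrt N\,\mathbf e_1^T & 0\end{smallmatrix}\right)$, and a one-line Schur-complement calculation gives $|\det M|=2N\,|\det(\hat K-\lambda I_{N-1})|$, where $\hat K$ is the $(N-1)\times(N-1)$ principal submatrix of $K$ obtained by removing the row and column indexed by $\mathbf e_1$. The central computation is then the covariance bookkeeping at $\mathbf x_0$: differentiating (\ref{6}) twice and specialising, for indices $k,l,p,m\ge 2$ one finds
\[
\mathbb E[\hat K_{kl}\hat K_{pm}]=\frac{1}{N}\bigl(\delta_{kp}\delta_{lm}\Phi_1'(1)+\delta_{lp}\delta_{km}\Phi_2(1)\bigr),
\]
which matches the covariance structure of (\ref{11}) under the identification $\hat K=\sqrt{\Phi_1'(1)/N}\,X_{N-1}$ with $\tau=\Phi_2(1)/\Phi_1'(1)$. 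Equally essential, every Kronecker term in the mixed covariance $\mathbb E[\hat K_{kl}f_p]$ carries a factor $\delta_{k1}$, $\delta_{l1}$, or $\delta_{p1}\delta_{k1}\delta_{l1}$, so that $\hat K$ is exactly independent of $\mathbf f$ on the transverse block and the expectation factorises.

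With $\hat K$ handled as $(\Phi_1'(1)/N)^{(N-1)/2}\mathbb E_{X_{N-1}}|\det(X_{N-1}-\lambda\sqrt{N/\Phi_1'(1)}\,I)|$, there remains a diagonal Gaussian integral over $\mathbf f$ against the $\mathbf h$-weight from step one. Using $\mathrm{Var}\,f_1=\Phi_1(1)+\Phi_2(1)$ and $\mathrm{Var}\,f_k=\Phi_1(1)$ for $k\ge 2$, this decouples into $N$ elementary one-dimensional Gaussian integrals, producing the exponential $\exp(-N\lambda^2/(2(\sigma^2+\Phi_1(1)+\Phi_2(1))))$ and the normalisation $(\sigma^2+\Phi_1(1))^{-(N-1)/2}(\sigma^2+\Phi_1(1)+\Phi_2(1))^{-1/2}$. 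A final substitution $\lambda\mapsto\lambda/\sqrt{\Phi_1'(1)}$ (accounting for the rescaled endpoints $\alpha/\sqrt{\Phi_1'(1)},\beta/\sqrt{\Phi_1'(1)}$) puts the shift inside the determinant into the form $\lambda\sqrt N\,I_{N-1}$, and collecting constants via the identities $(\sigma^2+\Phi_1(1)+\Phi_2(1))/\Phi_1'(1)=b^2+\tau$, $\Phi_1'(1)/(\sigma^2+\Phi_1(1))=b^{-2}$, and $|S^{N-1}|/(2\pi)^{N/2}=[2^{N/2-1}\Gamma(N/2)]^{-1}$ should reproduce (\ref{A}). The step I expect to require the most care is the covariance check: at a generic point of the sphere the mixed $f$-$\partial f$ covariance is nonzero and would force conditional-mean shifts of $\hat K$ that would ruin the Elliptic-Ensemble identification, so it matters sharply that in the chosen basis all such corrections are supported on the rank-one direction aligned with $\mathbf x_0$ and vanish identically on the $(N-1)$-dimensional transverse block.
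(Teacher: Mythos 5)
Your proposal is correct and follows essentially the same route as the paper: average over $\mathbf h$ first, use rotational covariance of the joint law of $(\mathbf f,\nabla\mathbf f,\mathbf h)$ to pin $\mathbf x$ at $\sqrt N\,\mathbf e_1$, reduce the bordered Jacobian determinant to $2N\,|\det(\tilde K-\lambda\mathbf 1_{N-1})|$, observe that at the pole the transverse block of the Jacobian is an elliptic-ensemble matrix (after rescaling by $\sqrt{\Phi_1'(1)/N}$, with $\tau=\Phi_2(1)/\Phi_1'(1)$) independent of $\mathbf f$, and finish with diagonal Gaussian integrals and the rescaling $\lambda\to\sqrt{\Phi_1'(1)}\lambda$. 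The only difference is presentational: the paper organises the marginalisation over the first row and column of $K$ and the invariance argument through the characteristic function $\mathcal F_{\mathbf x}(\mathbf q,Q)$, whereas you work directly with the covariances (\ref{24a})--(\ref{24b}), which for jointly Gaussian variables carries exactly the same content.
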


\begin{proof}

Averaging over the 'magnetic fields' $h_k$ is straightforward to perform as (\ref{6b}) implies
\begin{equation}\label{20}
\mathbb{E}\left\{ \prod_{k=1}^N
\delta\left(-\lambda \, x_k +h_k +\! f_k(\mathbf{x})\right)\right\}=\frac{1}{(2\pi\sigma^2)^{N/2}}\exp{\left\{-\frac{1}{2\sigma^2}\left(\lambda{\bf x}-\bf{f}({\bf x})\right)^2\right\}}
\end{equation}
so that
\begin{equation}\label{20a}
\mathbb{E}\left\{\mathcal{N}_{tot}\right\} \!=\!\int_{[\alpha,\beta]} d\lambda \int_{\mathbb{R}^N} \, d{\bf x}\, \delta\left( {\bf x}^2-N\right) \,\mathcal{I}(\bf{x},\lambda)
\end{equation}
where we have defined
\begin{equation}\label{20b}
\mathcal{I}({\bf x},\lambda) = \int \int \,\! \mathcal{P}_{\bf{x}}\left({\bf f},K\right) e^{-\frac{1}{2\sigma^2}\left(\lambda{\bf x}-{\bf f}\right)^2} \left| \det\left( \begin{array}{cc} K -\lambda{\bf 1}_N   & -{\bf x} \\ 2{\bf x}^T & 0 \end{array}
\right) \right|\, dK\, d{\bf f} \, ,
\end{equation}
where ${\bf 1}_N$ stands for $N\times N$ identity matrix,

To deal with the ensemble average we find it most convenient to introduce the Fourier-transform of the joint probability density via
\begin{equation}\label{21}
\mathcal{P}_{\bf{x}}\left({\bf f},K\right)=\int \mathcal{F}_{\bf{x}}({\bf q},Q) e^{-i{\bf q}^T{\bf f}-i\mbox{\small Tr}\left(KQ\right)} \frac{d{\bf q}}{(2\pi)^N}\frac{d Q }{(2\pi)^{N^2}}
\end{equation}
where ${\bf q}$ and $Q$ are $N-$component real vector and $N\times N$ real matrix, respectively, and $\mathcal{F}_{\bf{x}}({\bf q},Q)$ is given by
\begin{equation}\label{22}
 \mathcal{F}_{\bf{x}}({\bf q},Q) = \mathbb{E}\left\{ e^{i \sum_{k=1}^N q_kf_k+i\sum_{k,l=1}^N\frac{\partial f_k}{\partial x_l}Q_{lk}}\right\}
\end{equation}
\begin{equation}\label{23}
= e^{-\frac{1}{2}\sum_{k,p=1}^N q_kq_p\left\langle f_kf_p\right\rangle-\frac{1}{2}\sum_{k,l,p=1}^N
 \left\langle f_k\frac{\partial f_p}{\partial x_l}\right\rangle q_kQ_{lp}-\frac{1}{2}\sum_{k,l=1}^N \sum_{p,n=1}^NQ_{lk}Q_{np}  \left\langle\frac{\partial f_k}{\partial x_l}\frac{\partial f_p}{\partial x_n}\right\rangle}
\end{equation}
where we have used the gaussian nature of $f_{k}({\bf x})$ and $\frac{\partial}{\partial x_l}f_{k}({\bf x})$ and introduced the short-hand
notation $<AB>$ for the corresponding covariances. The explicit form for the covariances can be easily found from (\ref{6}) by differentiating
and eventually setting ${\bf x}={\bf x'}$:
\begin{equation}\label{24a}
 \left\langle f_k\frac{\partial f_p}{\partial x_l}\right\rangle=\delta_{kl}\frac{x_p}{N}\Phi_2\left(\frac{\bf{x}^2}{N}\right)+
 \delta_{kp}\frac{x_l}{N}\Phi'_1\left(\frac{\bf{x}^2}{N}\right)+
 \frac{x_p x_l x_k}{N^2}\Phi'_2\left(\frac{\bf{x}^2}{N}\right)
 \end{equation}
 and
 \begin{equation}\label{24b}
 \left\langle \frac{\partial f_k}{\partial x_n}\frac{\partial f_p}{\partial x_l}\right\rangle=
  \delta_{pn}\delta_{kl}\frac{1}{N}\Phi_2\left(\frac{\bf{x}^2}{N}\right)+\delta_{kp}\delta_{ln}\frac{1}{N}\Phi'_1\left(\frac{\bf{x}^2}{N}\right)
 +\delta_{kp}\frac{x_lx_n}{N^2}\Phi''_1\left(\frac{\bf{x}^2}{N}\right)
  \end{equation}
  \[+\left[\delta_{kl}\frac{x_px_n}{N^2}+\delta_{pn}\frac{x_lx_k}{N^2}+\delta_{ln}\frac{x_px_k}{N^2}\right]
\Phi'_2\left(\frac{\bf{x}^2}{N}\right)+
\frac{x_px_lx_kx_n}{N^3}\Phi_2''\left(\frac{\bf{x}^2}{N}\right)
\]
which implies
\begin{equation}\label{25a}
\sum_{k,p=1}^N q_kq_p\left\langle f_kf_p\right\rangle={\bf q}^2\Phi_1\left(\frac{\bf{x}^2}{N}\right)+\frac{1}{N}\left({\bf q}\cdot {\bf x}\right)^2\Phi_2\left(\frac{\bf{x}^2}{N}\right)\,,
\end{equation}
\begin{equation}\label{25b}
\sum_{k,l,p=1}^N
 \left\langle f_k\frac{\partial f_p}{\partial x_l}\right\rangle q_kQ_{lp}=\frac{1}{N}\left({\bf q}^TQ{\bf x}\right)\left[\Phi'_1\left(\frac{\bf{x}^2}{N}\right)+\Phi_2\left(\frac{\bf{x}^2}{N}\right)\right]+\frac{1}{N^2}\left({\bf q}\cdot {\bf x}\right)\left({\bf q}^TQ{\bf x}\right)\Phi'_2\left(\frac{\bf{x}^2}{N}\right)
 \end{equation}
 where we find it convenient to use here and henceforth the notation ${\bf q}^TQ{\bf x}$ for the inner product of the vectors ${\bf q}$  and $ Q{\bf x}$. Further
 \begin{equation}\label{25c}
 \sum_{k,p=1}^N \sum_{p,n=1}^NQ_{lk}Q_{np}  \left\langle\frac{\partial f_k}{\partial x_l}\frac{\partial f_p}{\partial x_n}\right\rangle=
 \frac{1}{N}\Phi'_1\left(\frac{\bf{x}^2}{N}\right)\mbox{Tr}\left(QQ^T\right)+\frac{1}{N}\Phi_2\left(\frac{\bf{x}^2}{N}\right)
 \mbox{Tr}\left(Q^2\right)
\end{equation}
\[
+ \frac{1}{N^2}\Phi''_1\left(\frac{{\bf x}^2}{N}\right) \left({\bf x}^TQQ^T{\bf x}\right)+\frac{1}{N^2}\Phi'_2 \left({\bf x}^2\right)\left[\left({\bf x}^TQQ^T\bf{x}\right)+2\left({\bf x}^TQ^2{\bf x}\right)\right]+
\frac{1}{N^3}\Phi''_2\left(\frac{\bf{x}^2}{N}\right) \left({\bf x}^TQ{\bf x}\right)^2
\]
The key observation is that the expressions (\ref{25a})-(\ref{25c}) remain invariant under the set of simultaneous transformations
\begin{equation}\label{26}
{\bf x}\to O{\bf x}, \quad {\bf q}\to O{\bf q}, \quad Q\to OQO^T \quad \mbox{where} \quad OO^T={\bf 1}_N
\end{equation}
 so that the matrix $O$ is any $N\times N$ orthogonal: $O\in O(N)$.
This via (\ref{23}) implies that
\begin{equation}\label{26a}
\mathcal{F}_{\bf{x}}({\bf q},Q)=\mathcal{F}_{O\bf{x}}(O{\bf q},OQO^T), \quad \forall O\in O(N)
\end{equation}
which in view of the invariance of the integration measure $d{\bf q} dQ$ translates via (\ref{21}) into
\begin{equation} \label{26b}
\mathcal{P}_{\bf{x}}({\bf f},K)=\mathcal{P}_{O\bf{x}}(O{\bf f},OKO^T), \quad \forall O\in O(N)
\end{equation}
Finally, substituting (\ref{26b}) to (\ref{20}) and noticing that also
\begin{equation}\label{27}
\left| \det\left( \begin{array}{cc} K -\lambda{\bf 1}_N   & -{\bf x} \\ 2{\bf x}^T & 0 \end{array}
\right) \right|=\left| \det\left( \begin{array}{cc} OKO^T -\lambda{\bf 1}_N   & -O{\bf x} \\ 2\left(O{\bf x}\right)^T & 0 \end{array}
\right) \right|, \quad \forall O\in O(N)
\end{equation}
makes it evident that
\begin{equation} \label{28}
\mathcal{I}\left(\bf{x},\lambda\right)=\mathcal{I}\left(O\bf{x},\lambda\right), \quad \forall O\in O(N).
\end{equation}
 We therefore conclude that $\mathcal{I}(\bf{x},\lambda)=\mathcal{I}(r,\lambda)$, i.e. depends only on the length $r=|{\bf x}|$ of the vector ${\bf x}$ but not on its direction.
Hence, we are free to choose the vector ${\bf x}$ in the form
\begin{equation}\label{29}
{\bf x}=r{\bf e}_1, \quad \mbox{where} \quad {\bf e}_1=\left(\begin{array}{c}1\\0\\.\\.\\.\\0\end{array}\right).
\end{equation}
In what follows it turns out to be convenient to decompose the matrix $K$ as
\begin{equation}\label{30}
K=\left(\begin{array}{cc}k_{11} & {\bf k}_1^T\\{\bf k}_2 & \tilde{K}\end{array}\right).
\end{equation}
where ${\bf k}_1$ and ${\bf k}_2$ are $(N-1)$ component vectors, and $\tilde{K}$ is a real matrix of size $(N-1)\times (N-1)$.
By expanding the determinant it is then easy to check that
\begin{equation}\label{31}
\left| \det\left( \begin{array}{cc} K -\lambda{\bf 1}_N   & -r{\bf e_1} \\ 2r\bf{e_1} & 0 \end{array}
\right) \right|=2r^2\left| \det\left( \begin{array}{c} \tilde{K} -\lambda{\bf 1}_{N-1}   \end{array}
\right) \right|\,.
\end{equation}
 Further, the invariance of $\mathcal{I}\left(\bf{x},\lambda\right)$ under $O(N)$ rotations allows to perform the integration in (\ref{20a}) as
 \begin{equation}\label{32a}
\mathbb{E}\left\{\mathcal{N}_{tot}\right\} \!=\!\frac{2(\pi N)^{N/2}}{\Gamma\left(\frac{N}{2}\right)}\int_{[\alpha,\beta]} \, \mathcal{I}(N,\lambda)\, d\lambda
\end{equation}
where
\begin{equation}\label{32b}
\mathcal{I}(N,\lambda) = \int\,\int\,\! \tilde{\mathcal{P}}\left({\bf f},\tilde{K}\right) e^{-\frac{1}{2\sigma^2}\left(\lambda\,\sqrt{N}{\bf e}_1-{\bf f}\right)^2} \left| \det\left( \begin{array}{c} \tilde{K} -\lambda{\bf 1}_{N-1}   \end{array}
\right) \right|\, d\tilde{K} \,  d{\bf f}\, ,
\end{equation}
and we defined
\begin{equation}\label{32c}
\tilde{\mathcal{P}}\left({\bf f},\tilde{K}\right)=\int \mathcal{P}_{\sqrt{N}{\bf e}_1}\left({\bf f},\tilde{K}\right)\,dk_{11}\,d{\bf k}_1\,d{\bf k}_2
\end{equation}
Employing now the Fourier-transform (\ref{21}) and the decomposition (\ref{30}) for $K$ it is easy to see that
\begin{equation}\label{32d}
\tilde{\mathcal{P}}\left({\bf f},\tilde{K}\right)=\int \tilde{\mathcal{F}}\left({\bf q},\tilde{Q}\right) e^{-i{\bf q}^T{\bf f}-i\mbox{\small Tr}\left(\tilde{K}\tilde{Q}\right)} \frac{d{\bf q}}{(2\pi)^N}\frac{d \tilde{Q} }{(2\pi)^{(N-1)^2}}
\end{equation}
where we have used the decomposition
 \begin{equation}\label{33}
Q=\left(\begin{array}{cc}q_{11} & {\bf p}_1^T\\{\bf p}_2 & \tilde{Q}\end{array}\right),
\end{equation}
with ${\bf p}_1,{\bf p}_2$ being $(N-1)-$component vectors, and the function $\tilde{\mathcal{F}}\left({\bf q},\tilde{Q}\right)$
is obtained from $\mathcal{F}_{\bf{x}}({\bf q},Q)$ by replacing  ${\bf x}\to \sqrt{N}{\bf e}_1$ and setting all the the variables $Q_{11},{\bf p}_1,{\bf p_2}$ to zero. Performing the required substitutions in  (\ref{25a})-(\ref{25c}) and   further decomposing
 the vector ${\bf q}$, in (\ref{21}) as  ${\bf q}^T=(q_1,\tilde{\bf q})$ with $\tilde{\bf q}$ being $(N-1)-$component vector yields after straightforward manipulations a simple expression
  \begin{equation}\label{34}
 \tilde{\mathcal{F}}\left({\bf q},\tilde{Q}\right)=\exp{\left\{-\frac{1}{2N}\left[\Phi_1(1)\left(q_1^2+\tilde{\bf q}^2\right)+
 \Phi_2(1)\,q_1^2+\Phi'_1(1)\mbox{Tr}\left(\tilde{Q}\tilde{Q}^T\right)+\Phi_2(1)\mbox{Tr}\left(\tilde{Q}^2\right)\right] \right\}}
  \end{equation}
With such expressions in hand, the integrals in (\ref{32d}) are straightforward to perform. In particular, we have
   \begin{equation}\label{35a}
\mathcal{P}(\tilde{K})=\int e^{-i\mbox{\small Tr}\left(\tilde{K}\tilde{Q}\right)-\frac{1}{2N}\left[\Phi'_1(1)\mbox{\small Tr}\left(\tilde{Q}\tilde{Q}^T\right)+\Phi_2(1)\mbox{\small Tr}\left(\tilde{Q}^2\right)\right] }\, \frac{d \tilde{Q} }{(2\pi)^{(N-1)^2}}
  \end{equation}
\begin{equation}\label{35b}
=C_K\,\exp\left\{-\frac{N}{2(\Phi'_1(1)^2-\Phi_2(1)^2)}\left[\Phi'_1(1)\mbox{\small Tr}\left(\tilde{K}\tilde{K}^T\right)-\Phi_2(1)\mbox{\small Tr}\left(\tilde{K}^2\right)\right]\right\}
 \end{equation}
where
\begin{equation}\label{35b}
C_K=\frac{1}{\left[\frac{2\pi}{N}\left(\Phi'_1(1)+\Phi_2(1)\right)\right]^{\frac{(N-1)^2}{2}}\left[\Phi'_1(1)-\Phi_2(1)\right]^{\frac{(N-1)(N-2)}{2}}}
 \end{equation}
Further integrating over  $ \tilde{\bf q}$ and $q_1$, and performing the Gaussian integral over ${\bf f}$ in  (\ref{32b}) we arrive at
\begin{equation}\label{36a}
\mathcal{I}(N,\lambda) = C_q e^{-\frac{N \lambda^2}{2\left(\sigma^2+\Phi_1(1)+\Phi_2(1)\right)}}\int\,\! \mathcal{P}\left(\tilde{K}\right)
 \left| \det\left( \begin{array}{c} \tilde{K} -\lambda{\bf 1}_{N-1}   \end{array}
\right) \right|\, d\tilde{K} \, \, ,
\end{equation}
with
\begin{equation}\label{36b}
C_q=\frac{1}{\sqrt{2\pi\left(\sigma^2+\Phi_1(1)+\Phi_2(1)\right)}\left[2\pi\left(\sigma^2+\Phi_1(1)\right)\right]^{\frac{(N-1)}{2}}}
 \end{equation}
Finally, substituting (\ref{36a}-\ref{36b}) into (\ref{32a}),  introducing the $(N-1)\times (N-1)$ matrices $X$ via the rescaling $\tilde{K}=\sqrt{\frac{\Phi'_1(1)}{N}} \, X$ and simultaneously rescaling $\lambda\to \sqrt{\Phi'_1(1)}\lambda$ one arrives at
(\ref{A}).
\end{proof}

Now we can proof {\bf Theorem (\ref{Thm1})}.
\begin{proof}
Assume for simplicity that $(\alpha,\beta)=(-\infty,\infty)$. The representation (\ref{A}) for the mean total number of equilibria is immediately converted to (\ref{12}) by employing the following relation between the  density of real eigenvalues $\rho_{N}^{(r)}(x)$ of the $N\times N$ matrices $X$
from the elliptic ensemble (\ref{11}) and the expected value of the modulus of the characteristic determinant featuring in  (\ref{A}):
\begin{equation}\label{37}
\mathbb{E}\left\{\left| \det\left( \begin{array}{c} X -\lambda\sqrt{N}{\bf 1}_{N-1}   \end{array}
\right) \right|\right\}_{X_{N-1}}=2\sqrt{1+\tau}\frac{(N-1)!}{(N-2)!!} e^{N\frac{\lambda^2}{2(1+\tau)}}\rho_{N}^{(r)}(\lambda\sqrt{N})\,
\end{equation}
For the limiting case $\tau=0$ this relation appeared originally in \cite{EKS}, and for $\tau=1$, when all eigenvalues of $X$ are real, by a different method in  \cite{Fyo04}. The general proof for any $\tau\in (-1,1)$ can be found in \cite{FK2016}.
\end{proof}

\section{Asymptotic analysis for $N\gg 1$: proof of Propositions (2.5)-(2.8)}

The density function $\rho_{N}^{(r)}(x)$ is known in the closed form in terms of Hermite polynomials \cite{ForNag2008}.
Assuming for simplicity that $N$ is even, one has $\rho_{N}^{(r)}(x)=\rho_{N}^{(r),1}(x)+\rho_{N}^{(r),2}(x)$ where
\begin{equation}\label{38a} \rho_{N}^{(r),1}(x)=
\frac{1}{\sqrt{2\pi}} \, \sum_{k=0}^{N-2}\, \frac{\big|\psi^{(\tau)}_k(x)\big|^2}{k!},
\end{equation}
and
\begin{equation}\label{38b}
 \rho_{N}^{(r),2}(x)=
\frac{1}{\sqrt{2\pi}(1+\tau)(N-2)!}\,  \psi^{(\tau)}_{N}(x) \int_0^{x}\psi^{(\tau)}_{N-2}(u)\,du.
\end{equation}
Here $\psi^{(\tau)}_k(x)=e^{-\frac{x^2}{2(1+\tau)}}h^{(\tau)}_k(x)$ and $h^{(\tau)}_k(x)$ are rescaled Hermite polynomials,
$h^{(\tau)}_{k}(x)=\frac{1}{\sqrt{\pi}}\int_{-\infty}^{\infty}e^{-t^2}
\left(x+it\sqrt{2\tau}\right)^k\,dt$.

Based on this expression, all relevant asymptotics of $\rho_{N}^{(r)}(\lambda\sqrt{N})$ for large $N\gg 1$ were worked out in \cite{ForNag2008} and \cite{FK2016}.  This allows one to carry out an asymptotic evaluation of the integral in (\ref{12}) in various regimes of parameters $\tau$ and $b^2$ in the limit $N\to \infty$. As the analysis goes very much in parallel to one presented in \cite{FK2016} we only sketch below its main steps working out explicitly the differences and necessary modifications.

 Fixing the parameter $-1<\tau<1$ and considering the spectral parameter $|x|$ in the bulk of the spectrum for elliptic ensemble, i.e., for $|x|<(1+\tau)\sqrt{N}$, one finds that the contribution of (\ref{38a}) to $\rho_{N}^{(r)}(x)$  is dominant and, to the leading order in $N$,
\begin{equation}\label{39a}
\left. \rho_{N}^{(r)}(\lambda\sqrt{N})\right|_{|\lambda|<1+\tau}
=\frac{1}{\sqrt{2\pi(1-\tau^2)}}.
\end{equation}
At the same time, outside the bulk for $|x|>(1+\tau)\sqrt{N}$ both (\ref{38a}) and (\ref{38b}) yield exponentially small contributions to $\rho_{N}^{(r)}(x)$, with  (\ref{38b}) being dominant. Our evaluation yields in this case the leading term as
  \begin{equation}\label{39b}
\left.  \rho_{N}^{(r)}(\lambda\sqrt{N}) \right|_{\lambda>(1+\tau)}= Q(\lambda) \exp{-N\Psi(\lambda)},
\end{equation}
where
 \begin{equation}\label{39b1}
Q(\lambda)=\left[\frac{N}{2\pi(1+\tau)}\frac{1}{\sqrt{\lambda^2-4\tau}(\lambda+\sqrt{\lambda^2-4\tau})}\right]^{1/2}\, ,
\end{equation}
 \begin{equation}\label{39b2}
\Psi(\lambda)=-\frac{1}{2}+\frac{\lambda^2}{2(1+\tau)}-\frac{1}{8\tau}(\lambda-\sqrt{\lambda^2-4\tau})^2-
\ln{\frac{\lambda+\sqrt{\lambda^2-4\tau}}{2}}\, .
\end{equation}
 Note that the corresponding expressions [24]-[25] presented in \cite{FK2016} contained several misprints, in particular the constant term in (\ref{39b2}) was missing and the spurous factor $\sqrt{\tau}$ appeared under the last logarithm), though the correct expressions (\ref{39b1}-\ref{39b2}) were used for actual calculations.

  The information above is sufficient to  verify the statements of {\bf Proposition (\ref{Prop1})}.

\begin{proof}

It is evident, that as long as the parameter $B$ defined in (\ref{12a}) is positive, that is as long as $b^2<1$, the leading asymptotics of the integral in (\ref{12}) is obtained by substituting the density $\rho_{N}^{(r)}(\lambda\sqrt{N})$ in the form  (\ref{39a}). Elementary asymptotic evaluation of the integral by the Laplace method yields (\ref{13}) thus proving the first part of the {\bf Proposition (\ref{Prop1})}.

In the opposite case of $B<0$ , that is $b^2>1$, the asymptotics is determined by a competition between growing exponential factor in the integrand and
the exponential decrease in the density  $\rho_{N}^{(r)}(\lambda\sqrt{N})$  in the region $|\lambda|>1+\tau$ described by the formulae (\ref{39b}-\ref{39b2}).  It is then evident (and is confirmed by direct calculation below) that the point $\lambda_{*}$ of maximum dominating the integrand
belongs to the domain $|\lambda|>1+\tau$, so for our purposes we need to evaluate the asymptotics of the integral
 \begin{equation}\label{40}
\mathcal{I}=\int_{1+\tau}^{\infty} Q(\lambda)e^{N \mathcal{L}(\lambda)}, \quad  \mathcal{L}(\lambda)=-\frac{B}{4}\lambda^2-\Psi(\lambda)
\end{equation}
Applying the Laplace method we seek for the maximum of $\mathcal{L}(\lambda)$ and to that end consider the stationary points satisfying
\begin{equation}\label{40a}
\frac{d}{d\lambda } \mathcal{L}(\lambda)\left|_{\lambda_*}\right.=0, \quad \mbox{where} \quad \frac{d}{d\lambda } \mathcal{L}(\lambda)=-\lambda\,\frac{1}{b^2+\tau}+\frac{1}{2\tau}(\lambda-\sqrt{\lambda^2-4\tau})
\end{equation}
which yields
\begin{equation}\label{40b}
\lambda_*=b+\frac{\tau}{b}
\end{equation}
which indeed satisfies $\lambda_*>1+\tau$ due to $b^2>1\ge \tau$. Further we find from (\ref{40b}) that
${\frac{\lambda_*+\sqrt{\lambda_*^2-4\tau}}{2}}=b$ which in turn implies after straightforward calculation:
\begin{equation}\label{41}
 \mathcal{L}(\lambda)\left|_{\lambda_*}\right.=\ln{b}, \quad \frac{d^2}{d\lambda^2 } \mathcal{L}(\lambda)\left|_{\lambda_*}\right.=-\frac{2b^2}{(b^2+\tau)(b^2-\tau)}\,.
 \end{equation}
Now applying the standard Laplace method to (\ref{40}) one finds that
\begin{equation}\label{42}
\mathcal{I}= Q(\lambda_*)e^{N \mathcal{L}(\lambda_*)}\frac{1}{\sqrt{N\frac{d^2}{d\lambda^2 } \mathcal{L}(\lambda)\left|_{\lambda_*}\right.}}=b^{N-1}\sqrt{\frac{b^2+\tau}{4N(1+\tau)}}
\end{equation}
which when combined with the same contribution to (\ref{12}) from the integration domain $\lambda<-(1+\tau))$ and multiplied
with the correct pre-factor from (\ref{12}) finally yields (\ref{13a}), thus completing the proof of the {\bf Proposition (\ref{Prop1})}.
Note that the last part of the calculation extends without change also to the boundary case $\tau=1$.

\end{proof}

Let us now sketch the proof of the {\bf Proposition (\ref{Prop2})}

\begin{proof}:  Fix any $|\tau|<1$ and for $N\to \infty$ scale the parameter $b$ with $N$ as $b^2=1-\frac{\gamma}{N}$, with the parameter $-\infty<\gamma<\infty$ being fixed. In this regime the product $NB\to \frac{\gamma}{(1+\tau)^2}$ so remains of the order of unity, hence the leading contribution to the integral (\ref{12}) comes from
 the integration domain $|\lambda|<1+\tau$ where the density of real eigenvalues is of the order of unity and is given asymptotically by (\ref{39a}).
 After dividing (\ref{12}) by $\sqrt{N}$ one can see that the remaining factors have a well-defined large-$N$ limit. In particular  $b^{-N}\to e^{\gamma/2}$,
   and after rescaling $\lambda\to (1+\tau)\lambda$ one straightforwardly arrives at (\ref{14}).
\end{proof}

Similarly, we sketch the proof of the {\bf Proposition (\ref{Prop3})}.

\begin{proof}
Fix any $|\tau|<1$ and for $N\to \infty$ scale the parameter $b$ with $N$ as $b^2=1+\frac{\kappa}{\sqrt{N}}$, with the parameter $\kappa>0$ being fixed.
It is clear that the main contribution to the integral (\ref{12}) comes in this case from the vicinity of the spectral edges $\lambda=\pm(1+\tau)$,
and it is enough to consider the right edge. In such a vicinity it turns out to be convenient to parametrize the spectral parameter as
\begin{equation}\label{43}
\lambda=1+\tau+\frac{\zeta}{\sqrt{N}}\sqrt{1-\tau^2}
\end{equation}
where the variable $-\infty<\zeta<\infty$ is considered to be of the order of unity. The behaviour of the density
$\rho_{N}^{(r)}(\lambda\sqrt{N})$ in that scaling regime was determined in the paper by Forrester and Nagao \cite{ForNag2008}
and is given precisely by the 'edge density' (\ref{15a}). In such a regime one also finds by direct computation the following large$-N$ limiting behaviour:
\begin{equation}\label{43a}
\lim_{N\to \infty} b^{-N}e^{-\frac{\lambda^2}{4}BN}=\exp\left\{-\frac{\kappa^2}{4}\frac{1-\tau}{1+\tau}+\kappa\zeta\sqrt{\frac{1-\tau}{1+\tau}}\right\}
\end{equation}
Using these facts it is immediate to arrive to the formula  (\ref{15}).

\end{proof}

Finally, the proof of the {\bf Proposition (\ref{Prop4})} goes along the following lines.

\begin{proof}
 Let $\tau=1-\frac{u^2}{N}, \, 0\le u<\infty$ and consider $b^2<1$. The real eigenvalues of the matrix $X$ for $N\to \infty$ with probability tending to one belong to the interval $\lambda\in (-2,2)$. Correspondingly, the mean density $\rho_{N}^{(r)}(\lambda\sqrt{N})$ in this scaling limit
 was found in \cite{FK2016} and is equal to the leading order to
 \begin{equation} \label{44}
 \rho_{N}^{(r)}(\lambda\sqrt{N})=\sqrt{N}{\pi}\int_0^{\sqrt{1-\frac{\lambda^2}{4}}} e^{-u^2t^2}\,dt, \quad |\lambda|<2
 \end{equation}
 and is exponentially small outside that interval. Moreover, in this regime we further have to the leading order $B=\frac{1-b^2}{2(1+b^2)}>0$ and the integral (\ref{12}) is dominated by small values $\lambda\sim N^{-1/2}$. Extracting in a standard way the leading contribution to the integral
 reproduces (\ref{17})

\end{proof}

\section*{References}


\begin{thebibliography}{99}
\bibitem{RMT}  The Oxford Handbook of Random Matrix Theory, eds G. Akemann, J. Baik and P. Di Francesco
(Oxford University Press, 2011)
\bibitem{SCS1988} H. Sompolinsky, A. Crisanti, and H. J. Sommers. Chaos in Random Neural Networks.
{\it Phys. Rev. Lett.} {\bf 61}, 259--262 (1988)
\bibitem{WT} G. Wainrib , J. Touboul.  Topological and Dynamical Complexity of Random Neural Networks. {\it Phys. Rev. Lett.} 110: 118101 (2013)
\bibitem{KS2015} J. Kadmon and H. Sompolinsky.  Transition to Chaos in Random Neuronal Networks.
{\it Phys. Rev. X} {\bf 5}, 041030 (2015)
\bibitem{deJong} H. de Jong.  Modeling and Simulation of Genetic Regulatory Systems: A Literature Review. {\it J. Comput. Biol.}   9: 67--107 (2002)
\bibitem{TY2003} K.Takita, Y. Yasutomi. Emergence of a complex and stable network in a model ecosystem with extinction and mutation.
{ Theor. Popul. Biology} {\bf 63}:131--146 (2003)
\bibitem{SVW05}
J.C. Sprott, J.A. Vano, J.C. Wildenberg, M.B. Anderson, J.K. Noel. Coexistence and chaos in complex ecologies.
{\it Phys. Lett. A} {\bf 335} 207–212 (2005)
\bibitem{AlessinaRev} S. Allesina,  Si Tang.  The stability complexity relationship at age 40: a random matrix perspective. {\it Popul Ecol} 57: 63-75(2015)
\bibitem{May1972} R.M. May. Will a Large Complex System be Stable? {\it Nature} {\bf 238:} 413--414 (1972)
\bibitem{GF} T. Galla , JD Farmer. Complex dynamics in learning complicated games. {\it Proc. Natl. Acad. Sci. USA} 110: 1232--1236 (2013)
\bibitem{SFM} P.F. Stadler, W. Fontana , J.H.  Miller. Random Catalytic Reaction Networks . {\it Physica D} 63: 378 -- 392 (1993)
\bibitem{FK2016} Y. V. Fyodorov and B. A. Khoruzhenko. Nonlinear analogue of the May−Wigner instability transition.
{\it Proc. Natl. Acad. Sci. USA} {\bf  113}(25): 6827–-6832 (2016)
\bibitem{FLD2013} Y. V. Fyodorov and P. Le Doussal. Topology Trivialization and Large Deviations for the Minimum in the Simplest Random Optimization. {\it J. Stat Phys.} {\bf 154}, Issue 1-2, 466-490 (2014)
\bibitem{FyoNad} Y.V. Fyodorov and C. Nadal.  Critical Behavior of the Number of Minima of a Random Landscape at the Glass Transition Point and the Tracy-Widom Distribution. {\it Phys. Rev. Lett.} 109: 167203 (2012)
\bibitem{my2015} Y.V. Fyodorov. High-Dimensional Random Fields and Random Matrix Theory. {\it Markov Proc. Relat. Fields} {\bf  21}(3):  483--518 (2015)
 \bibitem{BFKunpub} G. Ben Arous, Y. V. Fyodorov, B. A. Khoruzhenko. How many stable equilibria will a large complex system have? {\it under preparation}
\bibitem{CKDP} L.F Cugliandolo, J. Kurchan, P. Le Doussal, and L. Peliti. Glassy behaviour in disordered systems with nonrelaxational dynamics.
 {\it Phys. Rev. Lett.} {\bf 78}: 350--353 (1997)
\bibitem{CriSomp}  A. Crisanti and H. Sompolinsky. Dynamics of spin systems with randomly asymmetric bonds:
Langevin dynamics and a spherical model. {\it Phys. Rev. A} {\bf  36} 4922-4939 (1987)
\bibitem{CS1992} A. Crisanti and H.-J. Sommers. The spherical p-spin interaction spin glass model: the statics.
{\it Zeit. f. Phys. B} {\bf 87}, issue 3, 341-354 (1992)
\bibitem{CHS1993}  A. Crisanti, H. Horner  and H.-J. Sommers. The spherical p-spin interaction spin-glass model.
{\it Zeit. f. Phys. B} {\bf 92}, Issue 2, 257–-271 (1993)
\bibitem{BDG1} G. Ben Arous, A. Dembo, A. Guionnet. Aging of spherical spin glasses. {\it Probab. Theory Relat. Fields} {\bf 120}, Issue 1, 1--67 (2001)
\bibitem{Talagrand2003} M. Talagrand. Free energy of the spherical mean field model. {\it Probab. Theory Relat. Fields} {\bf 134}, 339–-382 (2006)
\bibitem{CHHS} W.-K. Chen, H.-W. Hsieh, C.-R. Hwang, Y.-C. Sheu. Disorder chaos in the spherical mean
field model. {\it J. Stat. Phys.} {\bf 160}, no. 2, 417–-429 (2015)
\bibitem{Auf1}  A. Auffinger, G. Ben Arous, C. Cerny.  Random matrices and complexity of spin glasses. {\it Comm. Pure. Appl. Math.} {\bf 66} , Issue 2, 165--201 (2013)
 \bibitem{Auf2} A. Auffinger, G. Ben Arous. Complexity of random smooth functions on the high-dimensional sphere.
{\it Ann. Prob.} {\bf 41},  Issue 6, 4214--4247(2013)
\bibitem{Fyo04} Y.V. Fyodorov.
Complexity of Random Energy Landscapes, Glass Transition, and Absolute Value of the Spectral Determinant of Random Matrices
{\it Phys. Rev. Lett.} {\bf 92}, issue 24 , 240601 (2004); Erratum {\it ibid} {\bf 93}, Issue 14 , 149901(E)(2004)
     \bibitem{BD07}  A.J. Bray, D. Dean.
The statistics of critical points of Gaussian fields on large-dimensional spaces {\it Phys. Rev. Lett.} {\bf 98}, issue 15 , Issue 15, 150201 (2007)
  \bibitem{FyoWi07} Y.V. Fyodorov, I. Williams. Replica symmetry breaking condition exposed by random matrix calculation of landscape complexity {\it J. Stat. Phys.}  {\bf 129},Issue 5-6 , 1081--1116 (2007)
  \bibitem{ForNag2008} P.J. Forrester and T. Nagao. Skew orthogonal polynomials and the partly symmetric real Ginibre ensemble.  {\it  J. Phys. A: Math. Theor.} {\bf 41}: 375003 (2008)
\bibitem{CheSchw}
D. Cheng, A. Schwartzman. Expected Number and Height Distribution of Critical Points of Smooth Isotropic Gaussian Random Fields.
{\it e-preprint} arXiv:1511.06835
\bibitem{Subag2015} E. Subag. The complexity of spherical p-spin models - a second moment approach. {\it e-preprint} arXiv:1504.0225
\bibitem{Mehta1} D. Mehta, J. D. Hauenstein, M. Niemerg, N. J. Simm, D. A. Stariolo.  Energy Landscape of the Finite-Size Mean-field 2-Spin Spherical Model and Topology Trivialization. {\it  Phys. Rev. E} {\bf 91}, 022133 (2015)
\bibitem{Mehta2}D. Mehta, D. A. Stariolo, M. Kastner. Energy Landscape of the Finite-Size Mean-field 3-Spin Spherical Model
{\it Phys. Rev. E} {\bf 87}, 052143 (2013)
\bibitem{SGBL} L. Sagun, V. Ugur Guney, G. Ben Arous, Y. LeCun.  Explorations on high dimensional landscapes. {\it e-preprint} arXiv:1412.6615
\bibitem{ChaudSoatto} P. Chaudhari, S. Soatto. The effect of gradient noise on the energy landscape of deep networks. {\it e-preprint} arXiv:1511.06485
\bibitem{Julius} Julius, G. Mahale, T. Sumana, C. S. Adityakrishna.
On the Modeling of Error Functions as High Dimensional Landscapes for Weight Initialization in Learning Networks.
 {\it e-preprint} 	arXiv:1607.06011
\bibitem{KTJ}] J. Kosterlitz, D. Thouless, and R. Jones. Spherical model of a spin-glass. {\it Phys. Rev. Lett.}
 {\bf 36}(20):1217–-1220 (1976)
\bibitem{CD1} L.F. Cugliandolo and D.S. Dean. Full dynamical solution for a spherical spin-glass model {\it J. Phys. A:
Math. Gen.} {\bf 28} 4213--4234 (1995).
\bibitem{CD2}   L.F. Cugliandolo and D.S. Dean. On the dynamics of a spherical spin-glass in a magnetic field.  {\it J. Phys.A: Math. Gen} {\bf 28}, Issue 17, L453--L459 (1995)
\bibitem{ZKH} W. Zippold, R. Kuhn and H. Horner. Nonequilibrium dynamics of a simple spherical spin models {\it Eur.
Phys. J. B} {\bf 13} 531–-537 (2000)
\bibitem{DZ} A. Dembo and O. Zeitouni. Matrix optimization under random external fields. {\it J. Stat. Phys.},
{\bf 159}(6):1306–1326 (2015)
\bibitem{BL2015} J. Baik and J.O. Lee.
Fluctuations of the free energy of the spherical Sherrington-Kirkpatrick model.
{\it J. Stat. Phys.}  {\bf 165}, Issue 2, 185–-224 (2016)
\bibitem{GT}G Genovese, D Tantari. Legendre duality of spherical and Gaussian spin glasses.
{\it Math. Phys. Anal. Geom.} {\bf 18} (1), 1--19 (2015)
\bibitem{FPS} Y.V. Fyodorov, A. Perret, and G. Schehr. Large time zero temperature dynamics of the spherical p = 2-spin glass model of finite size.
{\it J. Stat. Mech.} {\bf 2015} P11017 (2015)
\bibitem{math2}  J.-M. Azais and M. Wschebor  {\it Level Sets and Extrema of Random Processes and Fields} (John Wiley $\&$ Sons, 2009).
\bibitem{EKS} A. Edelman, E. Kostlan, and M. Shub. How many eigenvalues of a real matrix are real? {\it J. Amer. Math. Soc.} {\bf 7}: 247 -- 267 (1994)

\end{thebibliography}
\end{document}